\newtheorem{theorem}{Theorem}
\newtheorem{corollary}[theorem]{Corollary}
\newtheorem{lemma}[theorem]{Lemma}
\newtheorem{proposition}[theorem]{Proposition}
\newtheorem{example}[theorem]{Example}
\theoremstyle{definition}
\newtheorem{definition}[theorem]{Definition}
\theoremstyle{remark}
\newcommand{\nc}{\newcommand}
\nc{\tr}{\operatorname{Tr}}
\nc{\ket}[1]{|#1\rangle}
\nc{\bra}[1]{\langle#1|}
\nc{\proj}[1]{| #1\rangle\!\langle #1 |}
\nc{\beq}{\begin{equation}}
\nc{\eeq}{\end{equation}}
\nc{\rar}{\rightarrow}
\nc{\lrar}{\longrightarrow}
\nc{\supp}{\operatorname{supp}}
\nc{\ox}{\otimes}
\nc{\nb}{\nonumber}
\nc{\id}{{\operatorname{id}}}
\nc{\di}{\mathrm{d}}
\nc{\cD}{\mathcal{D}}  \nc{\cE}{\mathcal{E}}  \nc{\cH}{\mathcal{H}}
\nc{\cM}{\mathcal{M}}  \nc{\cN}{\mathcal{N}}  \nc{\cP}{\mathcal{P}}
\nc{\cS}{\mathcal{S}}  \nc{\cT}{\mathcal{T}}
\nc{\bR}{\bar{R}}  \nc{\tR}{\tilde{R}}  \nc{\bB}{\bar{B}}
\begin{document}
\title{Reliable Simulation of Quantum Channels: the Error Exponent}

\author{Ke~Li, Yongsheng~Yao

\thanks{The work of Ke Li was supported by the National Natural Science Foundation
of China (No. 61871156, No. 12031004). The work of Yongsheng Yao was supported
by the National Natural Science Foundation of China (No. 61871156, No. 12071099)
and the European Research Council (ERC Grant Agreement No. 948139).

Ke Li is with the Institute for Advanced Study in Mathematics,
Harbin Institute of Technology, Nangang District, Harbin 150001,
China (e-mail:carl.ke.lee@gmail.com).

Yongsheng Yao is with the Institute for Quantum Information,
RWTH Aachen University, Aachen 52074, Germany, and
he was with the Institute for Advanced Study in Mathematics and School
of Mathematics, Harbin Institute of Technology when this work was
done (e-mail:yongsh.yao@gmail.com).}}
\date{}

\maketitle
\begin{abstract}
The Quantum Reverse Shannon Theorem has been a milestone in quantum information theory. It states that asymptotically reliable simulation of a quantum channel, assisted by unlimited shared entanglement, requires a rate of classical communication equal to the channel's entanglement-assisted classical capacity. In this paper, we study the error exponent of quantum channel simulation, which characterizes the optimal speed of exponential convergence of the performance towards the perfect, as the blocklength increases. Based on channel purified distance, we derive lower and upper bounds for the error exponent. Then we show that the two bounds coincide when the classical communication rate is below a critical value, and hence, we have determined the exact formula of the error exponent in the low-rate case. This enables us to obtain an operational interpretation to the channel's sandwiched R\'enyi information of order from 1 to 2, since our formula is expressed as a transform of this quantity. In the derivation, we have also obtained an achievability bound for quantum channel simulation in the finite-blocklength setting, which is of realistic significance.
\end{abstract}

\begin{IEEEkeywords}
quantum channel simulation,
error exponent,
reliability function,
sandwiched R\'enyi information,
purified distance
\end{IEEEkeywords}

\section{Introduction}
\label{sec:introduction}
Any dynamical process in nature can be modeled as a quantum channel. The simulation of quantum channels is not only a fundamental problem in quantum communication~\cite{BDHSW2014quantum,BCR2011the,
BSST2002entanglement,Devetak2006triangle,BBCW2013entanglement,FWTB2020quantum,TWH2020application,
CLMW2011zero,DuanWinter2015no}, but has also attracted broad interest in the topics of quantum thermodynamics~\cite{FBB2019thermodynamic,FBB2021thermodynamic} as well as general quantum resource theory~\cite{GourScandolo2020dynamical,RegulaTakagi2021one,ChitambarGour2019quantum}. Dual to the ordinary scenario of distilling perfect resources from noisy channels, channel simulation is the problem of generating a noisy channel by making use of noiseless resources.

Reverse Shannon simulation of a quantum channel refers to the task of simulating a noisy quantum channel by consuming noiseless classical communication, assisted by an unlimited amount of free shared entanglement. The celebrated Quantum Reverse Shannon Theorem~\cite{BDHSW2014quantum} states that for asymptotically reliable simulation of a quantum channel $\cN_{A\rar B}$ from system $A$ to system $B$, an amount of classical communication equal to the channel's entanglement-assisted classical capacity $C_E(\cN)$, is necessary and sufficient~\cite{BDHSW2014quantum,BCR2011the}. The entanglement-assisted classical capacity, in turn, is given by the channel's mutual information~\cite{BSST2002entanglement,BSST1999entanglement}
\[
C_E(\cN)=I(\cN):=\max_{\varphi_{RA}}I(R:B)_{\cN(\varphi_{RA})},
\]
where the maximization is over all pure states $\varphi_{RA}$ on system $A$ and a reference system $R$, and $I(R:B)_{\cN(\varphi_{RA})}$ is the mutual information of the bipartite state $N_{A\rar B}(\varphi_{RA})$. Being one of the foundational components of quantum Shannon theory, the Quantum Reverse Shannon Theorem establishes that $C_E$ is the unique quantity to characterize reversible conversion of one channel to another assisted by free entanglement, and it has found applications in proving strong converse theorems~\cite{BDHSW2014quantum}, rate-distortion theorems~\cite{DHW2012quantum,WDHW2013quantum}, as well as results in network information theory~\cite{HsiehShun2016channel}.

While the Quantum Reverse Shannon Theorem proves that $I(\cN)$ is the sharp threshold above which asymptotically reliable simulation of $\cN_{A\rar B}$ is possible, it does not answer the question regarding how fast perfect simulation can be approached as the copy of channels grows. This is the goal pursued by the study of the error exponent, also known as reliability function, which characterizes the best rate of exponential convergence of the performance of an information processing task towards the perfect~\cite{Gallager1968information}. The study of reliability functions in quantum information dates back to more than two decades ago~\cite{BurnashevHolevo1998on,Winter1999coding,Holevo2000reliability}, and progresses have been made ever since~\cite{Hayashi2007error,Dalai2013lower,Hayashi2015precise,
DalaiWinter2017constant,QWW2018applications,MBGYA2019a,CHT2019quantum,CHDH2020non,Dupuis2021privacy}. Recently, exact formulas have been obtained for certain quantum state processing tasks, including quantum privacy amplification~\cite{LYH2023tight}, quantum state merging~\cite{LiYao2024reliability}, and source compression with quantum side information~\cite{Renes2022achievable}. However, precise evaluation of the reliability functions for general quantum channels was unknown prior to the present work.

In this paper, we study the reliability function of reverse Shannon simulation of a quantum channel $\cN_{A\rar B}$. We use purified distance as the measure of performance, which, being a function of the fidelity, is a natural distance measure in quantum information. Building on the preceding work~\cite{LiYao2024reliability} of the authors, we derive tight upper and lower bounds for the reliability function. We show that these two bounds coincide when the rate of classical communication cost is below a critical value $R_{\textrm{critical}}$. Thus, in the low-rate case, we have determined the exact formula of the reliability function. In the derivation, we have obtained an explicit achievability bound for the performance of simulating $\cN_{A\rar B}^{\ox n}$ for a finite blocklength $n$, which is of realistic significance. These results are given in terms of the sandwiched R\'enyi divergence~\cite{MDSFT2013on,WWY2014strong}. More specifically, the channel's sandwiched R\'enyi information~\cite{GuptaWilde2015multiplicativity}, defined as follows, plays a crucial role:
\[
  I_{\alpha}(\cN):=\max_{\varphi_{RA}}I_\alpha(R:B)_{\cN(\varphi_{RA})},
\]
where the maximization is over all pure state $\varphi_{RA}$, and $I_\alpha(R:B)_{\cN(\varphi_{RA})}$ is the sandwiched R\'enyi mutual information of the bipartite state $N_{A\rar B}(\varphi_{RA})$.

\emph{Note added:} an earlier version of the present work was posted on arXiv with a slightly different title (see~\cite{LiYao2021reliable}). In an independent and concurrent work~\cite{RTB2023moderate}, Ramakrishnan, Tomamichel and Berta have derived the moderate deviation expansion for quantum channel simulation. While the results of the work~\cite{RTB2023moderate} and the present one are largely different, there are some overlaps in technical details: the two works both employ channel purified distance to measure the performance, and they both make use of the de Finetti reduction based on purified distance.

Since our original arXiv post, there have been more activities in understanding the error exponents of quantum tasks. Renes~\cite{Renes2022achievable}, combined with an earlier work~\cite{CHDH2020non}, has obtained the reliability function of source compression with quantum side information. Cheng and Gao have proven tight exponential achievability bounds, based on trace distance, for quantum tasks related to soft covering and convex splitting~\cite{ChengGao2024error,ChengGao2023tight}. More recently, Beigi and Tomamichel~\cite{BeigiTomamichel2023lower} have derived new achievability bounds for classical communication over classical-quantum channels, as well as general channels with entanglement assistance.

The remainder of this paper is organized as follows. In Section~\ref{sec:notation-def} we deal with the notation and definitions. In particular, we formulate the definition of the main problem. In Section~\ref{sec:symmetrizaion} we introduce the technique of symmetrization and de Finetti reduction in channel simulation. In Section~\ref{sec:main-results} we present the main results. The proofs are given in the subsequent two sections, where in Section~\ref{sec:Proofs-achi} we deal with the achievability part, and in Section~\ref{sec:Proofs-converse} we treat the converse part. Finally, in Section~\ref{sec:discussion} we conclude the paper with some discussion.

\section{Notation and Definitions}
\label{sec:notation-def}
\subsection{Basic Notation}
Let $\cH$ be a Hilbert space. $\cP(\cH)$ denotes the set of positive semidefinite operators on $\cH$.
$\cS(\cH):=\{\rho \in \cP(\cH) | \tr\rho=1 \}$ and $\cS_{\leq}(\cH):=\{\rho \in \cP(\cH) | \tr\rho\leq 1\}$ are the set of quantum states and subnormalized quantum state, respectively. The Hilbert space associated with a quantum system $A$ is denoted by $\cH_A$. $|A|$ stands for the dimension of $\mathcal{H}_A$. We use $\1_A$ to denote the identity operator on $\cH_A$. For two operators $X,Y$ on $\cH$, we say that $X\geq Y$ if $X-Y\in\cP(\cH)$. The notations $\cP(\cH_A)$, $\cS(\cH_A)$ and $\cS_{\leq}(\cH_A)$ are abbreviated as $\cP(A)$, $\cS(A)$ and $\cS_{\leq}(A)$, respectively. The pure state $\proj{\varphi}\in\cS(\cH)$, usually abbreviated as $\varphi$, is also represented by the unit vector $\ket{\varphi}\in\cH$.

A quantum channel $\cN_{A\rar B}$ is a completely positive and trace-preserving (CPTP) map that sends states on the input Hilbert space $\cH_A$ to states on the output Hilbert space $\cH_B$. It is convenient to use the Stinespring dilation, which tells that there is an environment system $E$ and an isometry $U_\cN:\cH_A\rar\cH_B\ox\cH_E$ such that $\cN(\rho)=\tr_EU_\cN(\rho)$ for any input state $\rho_A$. Here we have used the notation $U_\cN(\rho)\equiv U_\cN\rho U_\cN^*$. An identity channel acting on system $R$ is denoted by $\id_R$.

The purified distance between two quantum states $\rho$ and $\sigma$ is defined as $P(\rho,\sigma)=\sqrt{1-F^2(\rho,\sigma)}$, where $F(\rho,\sigma)=\|\sqrt{\rho}\sqrt{\sigma}\|_1$ is the fidelity. The channel purified distance between two quantum channels $\cM_{A\rar B}$ and $\cN_{A\rar B}$ is given by
\[P\left(\cM_{A\rar B}, \cN_{A\rar B}\right)
:=\max_{\varphi_{RA}}P\left(\id_R\ox\cM(\varphi_{RA}),\id_R\ox\cN(\varphi_{RA})\right),\]
where we assume $|R|=|A|$ w.l.o.g., and the maximization is over all pure states. Note that for
simplicity, sometimes we omit the subscript indicating which system the channel acts on, and we even do not write the identity channel explicitly. For example, $N(\varphi_{RA})$ means $\id_R\ox\cN_{A\rar B}(\varphi_{RA})$.

Throughout this paper, $\exp$ and $\log$ are both with base 2.

\subsection{Quantum Entropies}
The sandwiched R\'enyi information quantities are with order $\alpha\in [\frac{1}{2},1)\cup (1,\infty)$. For quantum states $\rho, \sigma \in \cS(\cH)$, the sandwiched R\'enyi divergence~\cite{MDSFT2013on,WWY2014strong} is defined as
\[
D_{\alpha}(\rho\|\sigma):=\frac{1}{\alpha-1} \log \tr\big(\sigma^{\frac{1-\alpha}{2\alpha}}\rho\sigma^{\frac{1-\alpha}{2\alpha}}\big)^\alpha
\]
if either $\alpha\in (1,\infty)$ and $\supp(\rho)\subseteq\supp(\sigma)$ or $\alpha\in [\frac{1}{2},1)$ and $\supp(\rho)\not\perp\supp(\sigma)$, otherwise we set $D_{\alpha}(\rho\|\sigma)=+\infty$. For a bipartite quantum state $\rho_{AB} \in \cS(AB)$, the sandwiched R\'enyi mutual information is defined as~\cite{WWY2014strong, Beigi2013sandwiched}
\[
I_{\alpha}(A:B)_\rho:=\min_{\sigma_B\in\cS(B)} D_{\alpha}(\rho_{AB} \| \rho_A \ox \sigma_B).
\]
For a quantum channel $\cN_{A \rar B}$, its sandwiched R\'enyi information~\cite{GuptaWilde2015multiplicativity} is defined as
\[
  I_{\alpha}(\cN):=\max_{\varphi_{RA}}I_\alpha(R:B)_{\cN_{A\rar B}(\varphi_{RA})},
\]
where the maximization is over all pure states $\varphi_{RA}$.

When $\alpha\rar 1$, the above R\'enyi quantities converges respectively, to the quantum relative entropy, mutual information, and channel mutual information:
\begin{align*}
D(\rho\|\sigma) &:=\tr(\rho(\log\rho-\log\sigma)), \\
I(A:B)_\rho &:=D(\rho_{AB} \| \rho_A \ox \rho_B),  \\
I(\cN_{A \rar B})&:=\max_{\varphi_{RA}}I(R:B)_{\cN_{A\rar B}(\varphi_{RA})}.
\end{align*}

\subsection{Reverse Shannon Simulation of Quantum Channels}
The quantum reverse Shannon simulation concerns how to use classical communication and unlimited entanglement to simulate any quantum channel $\mathcal{N}_{A \rightarrow B}$.
\begin{definition}[\em{reverse Shannon simulation}]
Let $\mathcal{N}_{A \rightarrow B}$ be a quantum channel from Alice to Bob. A CPTP map $\cM_{A \rar B}$ is a reverse Shannon simulation for $\mathcal{N}_{A \rightarrow B}$ if it consists of using a shared bipartite entangled state, applying local operation at Alice's side, sending $c$ classical bits from Alice to Bob, and at last applying local operation at Bob's side. The performance of the simulation is measured by the channel purified distance $P(\cM_{A\rar B},\cN_{A\rar B})$.
\end{definition}

We define the optimal performance of reverse Shannon simulation under the condition that the classical communication cost is bounded.
\begin{definition}[\em{performance function}]
Let $\mathcal{N}_{A \rightarrow B}$ be a quantum channel. For given $c \geq 0$, the optimal performance of reverse Shannon simulation for $\mathcal{N}_{A \rightarrow B}$, given that the classical communication cost is upper bounded by $c$ bits, is defined as
\[
P^{\rm{sim}}(\mathcal{N}_{A \rightarrow B}, c)
:= \min_{\mathcal{M}_{A \rightarrow B}\in\mathfrak{S}_c} P(\mathcal{M}_
{A \rightarrow B}, \mathcal{N}_{A \rightarrow B}),
\]
where $\mathfrak{S}_c$ is the set of all the reverse Shannon simulations of $\mathcal{N}_{A \rightarrow B}$ whose classical communication cost does not exceed $c$ bits.
\end{definition}

For $n$ copies of the channel $\mathcal{N}_{A \rightarrow B}$, the optimal performance $P^{\rm{sim}}(\mathcal{N}^{\ox n}, nr)$ is expected to decrease exponentially with $n$. The reliability
function of the reverse Shannon simulation of $\mathcal{N}_{A \rightarrow B}$ is defined as the rate of such exponential decreasing.

\begin{definition}[\em{reliability function}]
\label{def:reliability}
Let $\mathcal{N}_{A \rightarrow B}$ be a quantum channel. For given $r \geq 0$, the reliability function
$E^{\rm{sim}}(\cN_{A\rar B},r)$ of the reverse Shannon simulation of $\mathcal{N}_{A \rightarrow B}$ is defined as
\[
E^{\rm{sim}}(\cN_{A\rar B}, r)
:=\liminf_{n\rightarrow\infty} \frac{-1}{n} \log P^{\rm{sim}}\big(\mathcal{N}^{\ox n}_{A \rar B}, nr\big).
\]
\end{definition}

In fact, we can also use quantum communication and unlimited entanglement to simulate the channel $\mathcal{N}_{A \rightarrow B}$. This is called reverse Shannon simulation via quantum communication.
\begin{definition}
\label{definition:simuquantum}
Let $\mathcal{N}_{A \rightarrow B}$ be a quantum channel from Alice to Bob. A reverse Shannon simulation via quantum communication for $\mathcal{N}_{A \rightarrow B}$ consists of using a shared bipartite entangled state $\phi_{A'B'}$, applying unitary transformation $U_{AA' \rightarrow M\tilde{A}}$ at Alice's side, sending the system $M$ from Alice to Bob, and at last applying unitary transformation $V_{B'M \rightarrow B\tilde{B}}$ at Bob's side. The simulation map is given by
\[
\mathcal{M}_{A \rightarrow B}(\cdot):=\tr_{\tilde{A}\tilde{B}}
\Big[V_{B'M \rar B\tilde{B}}\circ U_{AA' \rar M\tilde{A}}\big((\cdot) \ox \phi_{A'B'}\big)\Big].
\]
\end{definition}

Similarly, we define the optimal performance under the condition that the quantum communication cost is bounded.
\begin{definition}
Let $\mathcal{N}_{A \rightarrow B}$ be a quantum channel. For given $c \geq 0$, the optimal performance of reverse Shannon simulation via quantum communication for $\mathcal{N}_{A \rightarrow B}$, given that the quantum communication cost is upper bounded by $c$ qubits, is defined as
\[
P_{\rm{q}}^{\rm{sim}}(\mathcal{N}_{A \rightarrow B}, c)
:= \min_{\mathcal{M}_{A \rightarrow B}\in\mathfrak{Q}_c} P(\mathcal{M}_
{A \rightarrow B}, \mathcal{N}_{A \rightarrow B}),
\]
where $\mathfrak{Q}_c$ is the set of all the reverse Shannon simulations of $\mathcal{N}_{A \rar B}$ with quantum communication not more than $c$ qubits.
\end{definition}

The following relation is a direct result of the teleportation and dense coding protocols~\cite{BBCJPW1993teleporting, BennettWiesner1992communication}.
\begin{lemma}
\label{lem:relation}
Let $\mathcal{N}_{A \rightarrow B}$ be a quantum channel. For any $c \geq 0$ we have
\[
P^{\rm{sim}}(\mathcal{N}_{A \rightarrow B},2c)=P_{\rm{q}}^{\rm{sim}}(\mathcal{N}_{A \rightarrow B},c).
\]
\end{lemma}

\section{Symmetrization and de Finetti Reduction}
\label{sec:symmetrizaion}
Naturally, a simulation channel for $\cN_{A\rar B}^{\ox n}$ has input system $A^n\equiv A_1A_2...A_n$ and output system $B^n\equiv B_1B_2...B_n$. Let $S_n$ denote the symmetric group over the set $\{1,2,\ldots,n\}$. Let $W^\pi_{A^n}$ be the natural unitary representation of $\pi\in S_n$ on $\cH_A^{\ox n}$, given by
\[
W^\pi_{A^n}: \ket{\psi_1} \ox \cdots \ox \ket{\psi_n} \mapsto
\ket{\psi_{\pi^{-1}(1)}} \ox \cdots \ox \ket{\psi_{\pi^{-1}(n)}}, \quad\forall\ \ket{\psi_i} \in \cH_A.
\]
We say that a channel $\cM_{A^n\rar B^n}$ is symmetric if
\[\cM_{A^n\rar B^n}=W_{B^n}^{\pi^{-1}}\circ\cM_{A^n\rar B^n}\circ W_{A^n}^\pi\]
holds for any permutation $\pi\in S_n$.

To achieve the best performance, we can always symmetrize the simulation channel $\cM_{A^n\rar B^n}$ by randomly applying a permutation and its inverse, respectively before and after the action of $\cM$. This does not cause any loss of performance, as the target channel $\cN_{A\rar B}^{\ox n}$ itself is symmetric. The common randomness needed in the symmetrization can be obtained by making measurements at both sides of some shared maximally entangled states.

\begin{lemma}\label{lem:symmetrization}
Given the channel $\cN_{A\rar B}$ and a simulation channel $\cM_{A^n\rar B^n}$ for $\cN_{A\rar B}^{\ox n}$, we let
\[
 \widetilde{\cM}_{A^n\rar B^n}
=\sum_{\pi\in S_n}\frac{1}{|S_n|}W_{B^n}^{\pi^{-1}}\circ\cM_{A^n\rar B^n}\circ W_{A^n}^\pi
\]
be the symmetrization of $\cM_{A^n\rar B^n}$. Then
\[P\big(\widetilde{\cM}_{A^n\rar B^n},\cN_{A\rar B}^{\ox n}\big)
\leq P\big(\cM_{A^n\rar B^n},\cN_{A\rar B}^{\ox n}\big).\]
\end{lemma}

\begin{IEEEproof}
For any pure state $\varphi_{RA^n}$, we have
\begin{align*}
     & F\big(\widetilde{\cM}(\varphi_{RA^n}),\cN^{\ox n}(\varphi_{RA^n})\big)  \\
  =  & F\Big(\sum_{\pi\in S_n}\frac{1}{|S_n|}W_{B^n}^{\pi^{-1}}\circ\cM\circ W_{A^n}^\pi(\varphi_{RA^n}),
       \sum_{\pi\in S_n}\frac{1}{|S_n|}W_{B^n}^{\pi^{-1}}\circ\cN^{\ox n}\circ
        W_{A^n}^\pi(\varphi_{RA^n})\Big) \\
\geq & \sum_{\pi\in S_n}\frac{1}{|S_n|} F\big(W_{B^n}^{\pi^{-1}}\circ\cM\circ W_{A^n}^\pi(\varphi_{RA^n}),
        W_{B^n}^{\pi^{-1}}\circ\cN^{\ox n}\circ W_{A^n}^\pi(\varphi_{RA^n})\big) \\
  =  & \sum_{\pi\in S_n}\frac{1}{|S_n|} F\big(\cM ( W_{A^n}^\pi(\varphi_{RA^n})),
       \cN^{\ox n} ( W_{A^n}^\pi(\varphi_{RA^n}))\big) \\
\geq & \sum_{\pi\in S_n}\frac{1}{|S_n|} \min_{\varphi'_{RA^n}}F\big(\cM (\varphi'_{RA^n}),
       \cN^{\ox n} (\varphi'_{RA^n})\big) \\
  =  & \min_{\varphi'_{RA^n}}F\big(\cM (\varphi'_{RA^n}),\cN^{\ox n} (\varphi'_{RA^n})\big),
\end{align*}
where the first equality is due to the fact that $\cN_{A\rar B}^{\ox n}$ is symmetric, and the first inequality follows from the joint concavity of the fidelity~\cite{Watrous2018the}. This is equivalent to
\[\begin{split}
      P\big(\widetilde{\cM}(\varphi_{RA^n}),\cN^{\ox n}(\varphi_{RA^n})\big)
\leq &\max_{\varphi'_{RA^n}}P\big(\cM (\varphi'_{RA^n}),\cN^{\ox n} (\varphi'_{RA^n})\big)\\
  =  &P\big(\cM_{A^n\rar B^n},\cN_{A\rar B}^{\ox n}\big).
\end{split}\]
As $\varphi_{RA^n}$ is chosen arbitrarily, the statement follows.
\end{IEEEproof}

\medskip
Now, once the simulation process is restricted to be symmetric, the technique of de Finetti reduction~\cite{CKR2009postselection} applies. It states that the simulation is good for any input states as long as it is good for the de Finetti state $\zeta_{R_nA^n}$, which is the purification of
\[
\zeta_{A^n}:=\int\rho_A^{\ox n}\di\,\mu(\rho)
\]
with $\mu(\cdot)$ the probability measure on the space of density operators on $\cH_A$ induced by the
Hilbert-Schmidt metric. An alternative understanding for $\mu(\cdot)$ is by introducing a purification
system $\bR\cong A$. Then $\mu(\rho)=\nu(\phi_{\bR A}(\rho))$, where $\phi_{\bR A}(\rho)$ is the purification of $\rho_A$ and $\nu(\cdot)$ is the uniform spherical probability measure satisfying $\nu(\phi)=\nu(u\phi u^*)$ for any unitary operator $u$ on $\cH_{\bR A}$.

The original de Finetti reduction of~\cite{CKR2009postselection} was derived under the diamond norm as the measure of distance between quantum channels. Here we reformulate this result by employing the purified distance. For completeness, we give a proof in the Appendix. We point out that this reformulation based on the purified distance is made independently in~\cite{RTB2023moderate} as well.
\begin{lemma}\label{lem:deFa}
Let $\cM_{A^n\rar B^n}$ and $\cE_{A^n\rar B^n}$ be symmetric quantum channels such that
$\cM=W_{B^n}^{\pi^{-1}}\circ\cM\circ W_{A^n}^\pi$ and $\cE=W_{B^n}^{\pi^{-1}}\circ\cE\circ W_{A^n}^\pi$ for any permutation $\pi\in S_n$. Set $g_{n,|A|}:=\tbinom{n+|A|^2-1}{n}\leq (n+1)^{|A|^2-1}$. We have
\[
P(\cM_{A^n\rar B^n}, \cE_{A^n\rar B^n})
\leq \sqrt{g_{n,|A|}}\ P\big(\cM(\zeta_{R_nA^n}),\cE(\zeta_{R_nA^n})\big).
\]
\end{lemma}

\section{Main Results}
\label{sec:main-results}
In this section, we present the main results of this paper. Their proofs will be given in the next two sections. The first result is an exponential achievability bound for the finite-blocklength setting, namely, for the simulation of $\cN_{A\rar B}^{\ox n}$ with any integer $n$.

\begin{theorem}
\label{thm:achievability}
Let $\cN_{A\rar B}$ be a quantum channel. For the reverse Shannon simulation of $\cN_{A \rar B}^{\ox n}$ with bounded classical communication rate $r\geq 0$, we have
\begin{equation*}
P^{\rm{sim}}\left(\cN^{\ox n}, nr\right)
\leq f(n,s)\exp{\Big\{\!-n\frac{s}{2}\big(r-I_{1+s}(\cN)\big)\Big\}},
\end{equation*}
where $s\in(0,1]$ is arbitrary, and $f(n,s):=\sqrt{s^{-1}(n+1)^{(1+s)(d+1)^2}}$ with $d=\max\{|A|,|B|\}$.
\end{theorem}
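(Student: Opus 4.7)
\emph{Proof plan.} The overall strategy is to reduce the worst-case simulation of $\cN^{\ox n}$ to the fixed-input bound of Proposition~\ref{prop:achifixin} via the de Finetti reduction, and then single-letterize the resulting R\'enyi divergence using the additivity of the sandwiched R\'enyi mutual information. Because $\cN^{\ox n}$ is permutation-covariant, I would first symmetrize any candidate simulation channel at no extra cost (the required common randomness comes for free from measurements on shared entanglement) and apply the postselection bound~\eqref{eq:deF} to get $P(\cM^{\mathrm{sym}},\cN^{\ox n})\leq \sqrt{g_{n,|A|}}\,P(\cM^{\mathrm{sym}}(\zeta_{R_n A^n}),\cN^{\ox n}(\zeta_{R_n A^n}))$, replacing the worst-case input by the de Finetti state $\zeta_{R_n A^n}$.

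Next, I apply Proposition~\ref{prop:achifixin} to the channel $\cN^{\ox n}$ with input $\zeta_{R_n A^n}$, classical communication budget $c=nr$, and the auxiliary state chosen as $\bar\sigma_{B^n}:=\int \omega_B^{\ox n}\,d\mu(\omega)$, the analogous de Finetti state on the output side. This choice has two complementary virtues. First, $\bar\sigma_{B^n}$ is simultaneously $U^{\ox n}$- and $S_n$-invariant, so by Schur--Weyl duality it acts as a scalar on each isotypic component and has only $O(n^{|B|-1})$ distinct eigenvalues; combined with the analogous $O(n^{|A|-1})$ bound for $\zeta_{R_n}$, the eigenvalue count $v$ in Proposition~\ref{prop:achifixin} is at most polynomial in $n$. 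Second, the CKR postselection lemma~\cite{CKR2009postselection} yields $\sigma_{B^n}\leq g_{n,|B|}\,\bar\sigma_{B^n}$ for every permutation-invariant state $\sigma_{B^n}$. Letting $\sigma^*_{B^n}$ be a minimizer of $D_{1+s}(\cN^{\ox n}(\zeta)\|\zeta_{R_n}\ox\cdot)$, chosen permutation-invariant by averaging (possible because $\cN^{\ox n}(\zeta_{R_n A^n})$ is itself symmetric), the anti-monotonicity of $D_{1+s}$ in the second argument for $\alpha\geq 1/2$ combined with postselection and the additivity of the Gupta--Wilde R\'enyi mutual information~\cite{GuptaWilde2015multiplicativity} gives
\[
D_{1+s}\bigl(\cN^{\ox n}(\zeta)\,\big\|\,\zeta_{R_n}\ox\bar\sigma_{B^n}\bigr) \leq I_{1+s}(R_n:B^n)_{\cN^{\ox n}(\zeta)}+\log g_{n,|B|}\leq n\,I_{1+s}(\cN)+\log g_{n,|B|}.
\]
Plugging this back through Proposition~\ref{prop:achifixin} produces the target exponent $-n\frac{\ln 2}{2}s(r-I_{1+s}(\cN))$, with prefactor $\sqrt{g_{n,|A|}}\cdot\sqrt{v^s/s}\cdot g_{n,|B|}^{s/2}$, which a routine comparison of polynomial exponents shows is at most $f(n,s)$.

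The step I expect to be the real obstacle is the choice of the auxiliary state in Proposition~\ref{prop:achifixin}: it must simultaneously serve as a polynomial-eigenvalue regularizer (to keep $v$ subexponential) and as a sufficiently good proxy for the intrinsic R\'enyi optimizer (so that the divergence stays within an $O(\log n)$ window of $n\,I_{1+s}(\cN)$). The de Finetti state $\bar\sigma_{B^n}$ achieves both requirements through Schur--Weyl invariance and the CKR postselection inequality, but the crucial leverage point is the permutation symmetry of $\cN^{\ox n}(\zeta_{R_n A^n})$, which is what permits restricting $\sigma^*_{B^n}$ to the permutation-invariant cone where CKR applies.
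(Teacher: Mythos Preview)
Your proposal is correct and follows the same overall architecture as the paper: reduce to the de Finetti input $\zeta_{R_nA^n}$ via symmetrization and Eq.~\eqref{eq:deF}, invoke Proposition~\ref{prop:achifixin} with a permutation-invariant auxiliary state on $B^n$ so that the eigenvalue count $v$ is polynomial, and single-letterize using the additivity of $I_{1+s}(\cN)$. Both arguments also rely on the observation that the optimizer $\sigma^*_{B^n}$ of $D_{1+s}(\cN^{\ox n}(\zeta)\|\zeta_{R_n}\ox\cdot)$ can be chosen permutation-invariant.

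The one genuine difference is the choice of the auxiliary state fed to Proposition~\ref{prop:achifixin}. The paper plugs in the symmetric optimizer $\sigma^*_{B^n}$ itself, so that $D_{1+s}=I_{1+s}(R_n:B^n)$ exactly, and bounds its eigenvalue count via the generic estimate $(n+1)^{(|B|+2)(|B|-1)/2}$ for permutation-invariant states. You instead use the fully $U^{\ox n}\times S_n$-invariant state $\bar\sigma_{B^n}$, which yields the sharper Schur--Weyl eigenvalue bound $(n+1)^{|B|-1}$, and then close the gap to the optimizer through the operator inequality $\sigma^*_{B^n}\leq g_{n,|B|}\,\bar\sigma_{B^n}$ together with the dominance property $D_{1+s}(\rho\|\sigma')\leq D_{1+s}(\rho\|\sigma)$ when $\sigma\leq\sigma'$. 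This trades an exact divergence for an additive $\log g_{n,|B|}$ correction that becomes an extra factor $g_{n,|B|}^{s/2}$ in the prefactor. Your route is slightly more roundabout---it introduces a polynomial penalty only to reabsorb it---but the tighter eigenvalue count compensates, and the final exponent check against $(1+s)(d+1)^2$ indeed goes through. The paper's path is more direct since once $\sigma^*_{B^n}$ is known to be symmetric there is no need for the universal state at all; your path has the mild conceptual advantage that the state plugged into Proposition~\ref{prop:achifixin} is explicit and input-independent.
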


Then, we derive a one-shot converse bound. This is given by the smoothing quantity associated with the max-information of any bipartite state that can be generated by the channel. Let $\rho_{RB} \in \cS(RB)$ be a bipartite quantum state. Recall that for any real number $\lambda$ the smoothing quantity associated with the max-information of $\rho_{RB}$ is defined~\cite{LiYao2024reliability} as
\begin{equation}
\label{eq:smooths1}
\delta_{R:B}(\rho_{RB},\lambda):=
\min_{\sigma_B\in\cS(B)} \delta(\rho_{RB}\|\rho_R\ox\sigma_B,\lambda),
\end{equation}
where for $\varrho \in \cS(\cH)$ and $\omega \in \cP(\cH)$,
\begin{equation}
\label{eq:smooths2}
\delta(\varrho \| \omega,\lambda):=\min\left\{ P(\tilde{\varrho}, \varrho): \tilde{\varrho}\in
\cS_{\leq}(\cH),\,\tilde{\varrho} \leq 2^\lambda\omega\right\}.
\end{equation}
Employing this quantity, we obtain the following one-shot converse bound.
\begin{theorem}
  \label{thm:converse}
Let $\cN_{A\rar B}$ be a quantum channel and $c\geq 0$. For any bipartite pure state $\varphi_{RA}$ we have
\[
  P^{\rm{sim}}(\mathcal{N}, c) \geq \delta_{R:B}\big(\cN(\varphi_{RA}),c\big).
\]
\end{theorem}

Based on Theorem~\ref{thm:achievability} and Theorem~\ref{thm:converse}, we prove the following characterization of the reliability function.
\begin{theorem}
\label{thm:reliability}
Let $\cN_{A \rar B}$ be a quantum channel and $r \geq 0$. The reliability function $E^{\rm{sim}}(\cN,r)$ for the reverse Shannon simulation of $\cN_{A \rar B}$ satisfies
\begin{align}
E^{\rm{sim}}(\cN, r) &\geq \frac{1}{2} \max_{0 \leq s \leq 1}
\big\{s\big(r-I_{1+s}(\mathcal{N})\big)\big\},
  \label{eq:reliability-l}\\
E^{\rm{sim}}(\cN, r) &\leq \frac{1}{2} \sup_{s \geq 0}
\big\{s\big(r-I_{1+s}(\mathcal{N})\big)\big\}.
  \label{eq:reliability-u}
\end{align}
In particular, when $r\leq R_{\rm{critical}}
:=\frac{\mathrm{d}}{\mathrm{d}s}sI_{1+s}(\mathcal{N})\big|_{s=1}$,
\begin{equation}
  \label{eq:reliability-e}
E^{\rm{sim}}(\cN, r)=\frac{1}{2} \max_{0 \leq s \leq 1}
\big\{s\big(r-I_{1+s}(\mathcal{N})\big)\big\}.
\end{equation}
\end{theorem}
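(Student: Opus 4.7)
The proof of Theorem~\ref{thm:reliability} breaks into three pieces---the achievability~\eqref{eq:reliability-l}, the converse~\eqref{eq:reliability-u}, and the matching~\eqref{eq:reliability-e} at rates $r\leq R_{\rm{critical}}$---which I would tackle in that order.

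The achievability follows directly from Theorem~\ref{thm:achievability}. Since $f(n,s)=\sqrt{s^{-1}(n+1)^{(1+s)(d+1)^2}}$ is polynomial in $n$, $\frac{1}{n}\log f(n,s)\to 0$. Taking $-\frac{1}{n}\log$ of the achievability bound and passing to $\liminf_{n\to\infty}$ leaves this prefactor as $o(1)$, giving $E^{\rm{sim}}(\cN,r)\geq \frac{s}{2}(r-I_{1+s}(\cN))$ for every $s\in(0,1]$ (and trivially for $s=0$); maximizing over $s\in[0,1]$ yields~\eqref{eq:reliability-l}.

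For the converse, I would apply Theorem~\ref{thm:converse} to $\cN^{\ox n}$ with a product input $\psi_{RA}^{\ox n}$:
\begin{equation*}
P^{\rm{sim}}(\cN^{\ox n}, nr)\geq \delta_{R^n:B^n}\bigl([\cN(\psi_{RA})]^{\ox n}, nr\bigr),
\end{equation*}
and then invoke the asymptotic lower bound on $\delta$ for i.i.d.\ states from~\cite{LiYao2021reliability}, roughly of the form $-\frac{1}{n}\log\delta_{R^n:B^n}(\rho^{\ox n}, n\lambda)\leq \frac{1}{2}\sup_{s\geq 0}s(\lambda-I_{1+s}(R:B)_\rho)+o(1)$. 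Combined with the additivity $I_{1+s}(\cN^{\ox n})=nI_{1+s}(\cN)$~\cite{GuptaWilde2015multiplicativity} and a choice of $\psi$ saturating $I_{1+s^*}(\cN)$ at the optimizing R\'enyi order $s^*$, this delivers~\eqref{eq:reliability-u}.

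To match the two bounds under $r\leq R_{\rm{critical}}$, set $h(s):=sI_{1+s}(\cN)$ and $g(s):=sr-h(s)$. A standard convexity property of sandwiched R\'enyi divergences (cf.~\cite{MosonyiOgawa2017strong}) implies that $h$ is convex on $[0,\infty)$, so $g$ is concave, and $g'(1)=r-h'(1)=r-R_{\rm{critical}}\leq 0$ by hypothesis. Hence $g$ is non-increasing on $[1,\infty)$ and $\sup_{s\geq 0}g(s)=\max_{s\in[0,1]}g(s)$, so the lower and upper bounds coincide, giving~\eqref{eq:reliability-e}. I expect the main technical hurdle to be the converse step: importing the asymptotic lower bound on $\delta$ from~\cite{LiYao2021reliability} and selecting the input $\psi$ so that the right-hand side of~\eqref{eq:reliability-u} comes out with a $\sup_{s\geq 0}$ (rather than a potentially larger quantity coming from a naive $\min_\psi$--$\sup_s$ swap).
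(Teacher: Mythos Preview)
Your achievability argument and your matching argument for $r\leq R_{\rm{critical}}$ are essentially the same as the paper's, and they are fine.

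The gap is in the converse, and you correctly locate it yourself in your last sentence. Your suggested fix---pick a single input $\psi$ that saturates $I_{1+s^*}(\cN)$ at the optimal order $s^*$---does not close it. The problem is that for any fixed $\psi$ one only has $I_{1+s}(R:B)_{\cN(\psi)}\leq I_{1+s}(\cN)$ for \emph{every} $s$, with equality at most at the particular $s$ where $\psi$ is the maximizer. Hence
\[
\sup_{s\geq 0}\bigl\{s\bigl(r-I_{1+s}(R:B)_{\cN(\psi)}\bigr)\bigr\}
\ \geq\
\sup_{s\geq 0}\bigl\{s\bigl(r-I_{1+s}(\cN)\bigr)\bigr\},
\]
so plugging your chosen $\psi^*$ into the asymptotic $\delta$-exponent from \cite{LiYao2021reliability} yields an upper bound that is in general \emph{larger} than the right-hand side of \eqref{eq:reliability-u}, not equal to it. There is no reason a universal maximizer (one $\psi$ achieving $I_{1+s}(\cN)$ simultaneously for all $s$) should exist. (Also, the additivity $I_{1+s}(\cN^{\ox n})=nI_{1+s}(\cN)$ plays no role in the converse; you use a product input and the i.i.d.\ state exponent directly.)

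The paper resolves the $\min_\psi$--$\sup_s$ swap via Sion's minimax theorem. This requires two ingredients: (i) convexity of $s\mapsto sI_{1+s}(\cN,\varphi_A)$ for fixed $\varphi_A$ (which you already have, from \cite{HayashiTomamichel2016correlation} or \cite{MosonyiOgawa2017strong}), and (ii) concavity of $\varphi_A\mapsto I_{1+s}(\cN,\varphi_A)$ on $\cS(A)$ for each fixed $s\geq 0$. The latter is the missing idea: it is proved separately in the paper by purifying a convex combination $\lambda\rho_A+(1-\lambda)\sigma_A$ with a flag system, applying the dephasing map on the flag, and using data processing of the sandwiched R\'enyi divergence. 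With (i) and (ii) in hand (plus the obvious semicontinuity), Sion gives
\[
\min_{\varphi_A}\sup_{s\geq 0}\bigl\{s\bigl(r-I_{1+s}(\cN,\varphi_A)\bigr)\bigr\}
=\sup_{s\geq 0}\min_{\varphi_A}\bigl\{s\bigl(r-I_{1+s}(\cN,\varphi_A)\bigr)\bigr\}
=\sup_{s\geq 0}\bigl\{s\bigl(r-I_{1+s}(\cN)\bigr)\bigr\},
\]
which is exactly \eqref{eq:reliability-u}.
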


The results of Theorem~\ref{thm:reliability} are depicted in Fig.~\ref{fig:reliability}. As long as the classical communication rate $r$ is below the critical value $R_\text{critical}$, we have determined the explicit formula for the reliability function. When the classical communication rate $r$ is above the critical value, the two bounds diverge from each other more and more (cf. Proposition~\ref{prop:propertyreli}). It is worth pointing out that when $r\geq 2\log\min\{|A|,|B|\}$, perfect simulation can be achieved by teleportation~\cite{BBCJPW1993teleporting}, even in the one-shot setting. So in this situation the reliability function is infinity.

To see at what range of the communication rate we have obtained the exact error exponent, we computer the critical value and the mutual information of depolarizing channels; see Example~\ref{exam:crit-mi} in the Appendix. It shows that $R_\text{critical}$ is considerably larger than $I(\cN)$.
\begin{figure}[ht]
  \includegraphics[width=8cm]{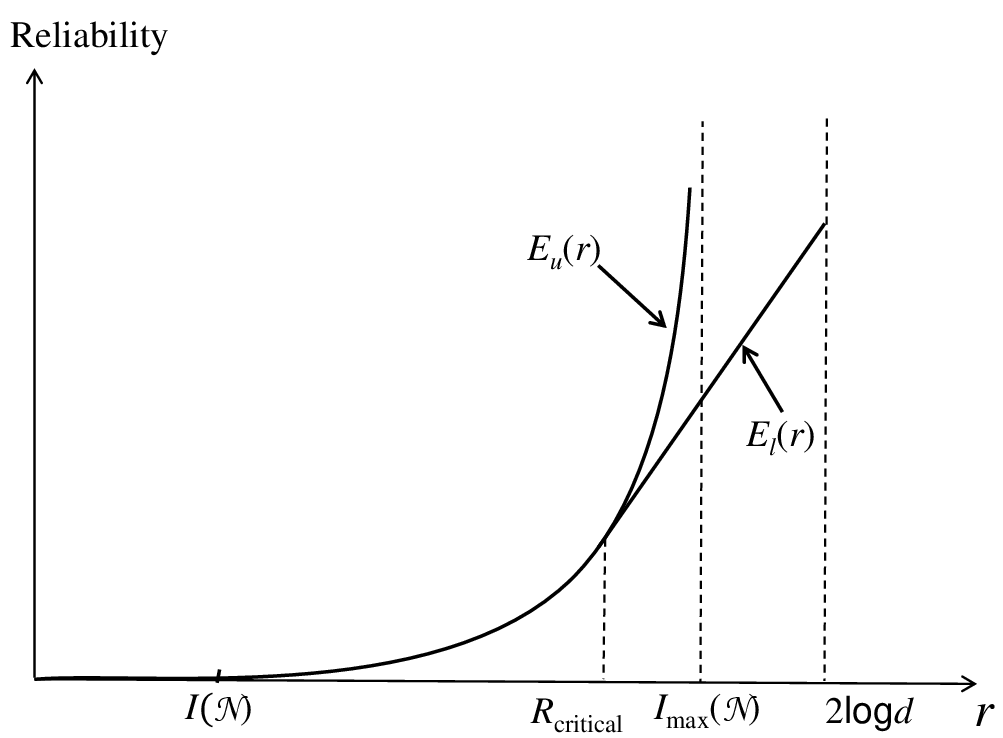} \centering
  \caption{Reliability function of quantum channel simulation.
  $E_l(r):=\frac{1}{2} \max_{0 \leq s \leq 1}\big\{s\big(r-I_{1+s}(\mathcal{N})\big)\big\}$ is the lower bound of Eq.~(\ref{eq:reliability-l}).
  $E_u(r):=\frac{1}{2} \sup_{s \geq 0}\big\{s\big(r-I_{1+s}(\mathcal{N})\big)\big\}$ is the upper bound of Eq.~(\ref{eq:reliability-u}).
  The two bounds are equal in the interval $[0,R_\text{critical}]$, giving the exact reliability function. Above the critical value $R_\text{critical}$, the upper bound $E_u(r)$ increases faster and it diverges to infinity when $r>I_\text{max}(\cN):=\lim\limits_{\alpha\rar\infty}I_\alpha(\cN)$, while the lower bound $E_l(r)$ becomes linear and reaches $\log d-\frac{1}{2}I_2(\cN)$ at $r=2\log d$ with $d:=\min\{|A|,|B|\}$.}
  \label{fig:reliability}
\end{figure}

The case where the classical communication rate $r$ is around the channel's mutual information $I(\cN)$ is of particular interest. We will see that the Quantum Reverse Shannon Theorem follows as an immediate corollary of Theorem~\ref{thm:reliability}.

\begin{corollary}[Quantum Reverse Shannon Theorem~\cite{BDHSW2014quantum,BCR2011the}]
\label{cor:QRST}
For a quantum channel $\cN_{A \rar B}$, the minimum classical communication rate $C_{\rm{QRST}}$ needed for asymptotically reliable simulation of $\cN_{A \rar B}$ is given by the channel's mutual information $I(\cN)$.
\end{corollary}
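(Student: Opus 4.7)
The plan is to establish both matching bounds $C_{\rm{QRST}} \le I(\cN)$ (achievability) and $C_{\rm{QRST}} \ge I(\cN)$ (converse). The achievability falls out of Theorem~\ref{thm:reliability} almost immediately, while the converse is most cleanly read off from the one-shot bound of Theorem~\ref{thm:converse}.

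For achievability I would use Eq.~\eqref{eq:reliability-l}. The only analytic input needed is continuity of $s\mapsto I_{1+s}(\cN)$ at $s=0$, together with the standard identity $\lim_{\alpha\rar 1}I_\alpha(\cN)=I(\cN)$ (which follows from continuity of the sandwiched R\'enyi divergence at $\alpha=1$ and compactness of the set of input purifications). Given any $r>I(\cN)$, this continuity supplies an $s_0\in(0,1]$ with $I_{1+s_0}(\cN)<r$, whence
\[
E^{\rm{sim}}(\cN,r) \;\geq\; \tfrac{s_0}{2}\bigl(r-I_{1+s_0}(\cN)\bigr) \;>\; 0,
\]
so $P^{\rm{sim}}(\cN^{\ox n},nr)$ decays exponentially to $0$. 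Hence every $r>I(\cN)$ is achievable, i.e.\ $C_{\rm{QRST}}\leq I(\cN)$.

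For the converse I would fix an optimizer $\varphi_{RA}$ of $I(\cN)$, apply Theorem~\ref{thm:converse} to $\cN^{\ox n}$ with the tensor-power input $\varphi_{RA}^{\ox n}$, and use the product structure $\cN^{\ox n}(\varphi_{RA}^{\ox n})=\cN(\varphi_{RA})^{\ox n}$ to obtain
\[
P^{\rm{sim}}(\cN^{\ox n},nr) \;\geq\; \delta_{R^n:B^n}\bigl(\cN(\varphi_{RA})^{\ox n},\,nr\bigr).
\]
Whenever $r<I(\cN)=I(R{:}B)_{\cN(\varphi_{RA})}$, the asymptotic equipartition property for smoothing the max-information---the same ingredient from~\cite{LiYao2021reliability} that underlies Eq.~\eqref{eq:reliability-u}---forces the right-hand side to stay bounded away from $0$ (in fact to tend to $1$ exponentially fast). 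Therefore $P^{\rm{sim}}(\cN^{\ox n},nr)\not\rar 0$, ruling out reliable simulation at any rate $r<I(\cN)$ and giving $C_{\rm{QRST}}\geq I(\cN)$.

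The main subtlety lies on the converse side. Eq.~\eqref{eq:reliability-u} by itself only delivers $E^{\rm{sim}}(\cN,r)=0$ for $r<I(\cN)$, which is a priori compatible with subexponentially vanishing error and therefore does not, on its own, rule out reliable simulation below capacity. One must instead return to the one-shot converse of Theorem~\ref{thm:converse} and invoke the below-threshold behavior of the smoothing quantity $\delta_{R:B}(\cdot,\cdot)$---the same underlying tool, but applied at the one-shot level rather than after passing to exponents. The achievability half, by contrast, is a genuinely immediate consequence of Eq.~\eqref{eq:reliability-l}.
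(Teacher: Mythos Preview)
Your achievability half matches the paper's exactly: both use Eq.~\eqref{eq:reliability-l} together with $I_{1+s}(\cN)\to I(\cN)$ as $s\downarrow 0$ (cited there from~\cite{CMW2016strong}) to conclude $E^{\rm{sim}}(\cN,r)>0$ for every $r>I(\cN)$, hence $C_{\rm{QRST}}\le I(\cN)$.

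The converse is where you diverge. The paper does not argue internally: it simply invokes the entanglement-assisted capacity theorem $C_E(\cN)=I(\cN)$ of~\cite{BSST2002entanglement} and observes that any reliable simulation of $\cN$ can be composed with a capacity-achieving code for $\cN$, so the classical communication cost cannot fall below $C_E(\cN)$. Your route via Theorem~\ref{thm:converse} and the below-threshold behaviour of $\delta_{R^n:B^n}$ is more self-contained and avoids this external input. One caution, though: the tool you actually need there is \emph{not} the exponent formula Eq.~\eqref{eq:expmax} that underlies Eq.~\eqref{eq:reliability-u}. For $r<I(R{:}B)$ that formula only yields exponent $0$, which is compatible with $\delta$ vanishing subexponentially---precisely the pitfall you correctly flag for $E^{\rm{sim}}$ in your last paragraph. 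What makes your argument go through is a strong-converse-type statement (or the smooth max-information AEP) guaranteeing $\delta_{R^n:B^n}(\cN(\varphi_{RA})^{\ox n},nr)\to 1$ when $r<I(R{:}B)$; such results are indeed available, but they are not literally ``the same ingredient'' as Eq.~\eqref{eq:expmax}. The paper's shortcut trades this self-containment for brevity by borrowing the capacity theorem.
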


It is shown in~\cite{CMW2016strong} that $I_{1+s}(\cN)$ is an increasing function of $s$ and $I_{1+s}(\cN)$ converges to $I(\cN)$ as $s$ goes to $0$. With this we get from Theorem~\ref{thm:reliability} that $E^{\rm{sim}}(\cN, r)=0$ when $r \leq I(\cN)$, and $E^{\rm{sim}}(\cN, r)>0$ when $r > I(\cN)$. The latter implies that any rate $r$ above $I(\cN)$ is sufficient for asymptotically reliable simulation. So, $C_{\rm{QRST}} \leq I(\cN)$. On the other hand, $C_{\rm{QRST}} \geq I(\cN)$ is obvious~\cite{BDHSW2014quantum,BCR2011the}, because the rate of classical communication necessary for reliable simulation of $\cN_{A \rar B}$ has to be at least equal to the channel's entanglement-assisted classical capacity, which is $I(\cN)$~\cite{BSST2002entanglement}.

\section{Achievability Bounds}
\label{sec:Proofs-achi}
In this section, we prove Theorem~\ref{thm:achievability} as well as the achievability part of Theorem~\ref{thm:reliability}. At first, in Section~\ref{subsec:fixed-input} we consider the problem of channel simulation with a fixed input state and obtain a one-shot achievability bound. Then, combining this with the technique of de Finetti reduction, we prove in Section~\ref{subsec:finite-n} the finite-blocklength bound of Theorem~\ref{thm:achievability}, from which the achievability part of Theorem~\ref{thm:reliability} follows.

\subsection{Channel Simulation with a Fixed Input}
\label{subsec:fixed-input}
When the input sate is fixed, the quantum state splitting protocol provides us an explicit simulation scheme~\cite{ADHW2009mother,BDHSW2014quantum,BCR2011the}. This is the reverse procedure of quantum state merging, which in turn is related to quantum information decoupling~\cite{ADHW2009mother,ADJ2017quantum,MBDRC2017catalytic}. Employing the catalytic version of quantum information decoupling and its induced state merging strategy of~\cite{ADJ2017quantum, MBDRC2017catalytic}, we derive the following bound based on the analysis of~\cite{LiYao2024reliability}.

\begin{proposition}
\label{prop:achifixin}
Let $\cN_{A \rar B}$ be a quantum channel. Fix the input state $\varphi_{RA}$ and let $\varphi_{RB}=\cN (\varphi_{RA})$. For arbitrary $c \geq 0$ and state $\sigma_B$, there is a reverse Shannon simulation $\cM_{A \rar B}$ for $\cN_{A \rar B}$ with $c$ bits of classical communication, such that for any $s\in(0,1]$,
\begin{align}
     & P\big(\cM(\varphi_{RA}),\cN(\varphi_{RA})\big) \nonumber\\
\leq & \sqrt{\frac{v^s}{s}}\exp\Big\{-\frac{s}{2} \big(c-D_{1+s}
       (\varphi_{RB}\| \varphi_R\ox\sigma_B)\big)\Big\}, \nonumber
\end{align}
where $v$ is the number of distinct eigenvalues of $\varphi_R\ox \sigma_B$.
\end{proposition}

\begin{IEEEproof}
Let $U_\cN^{A\rar BE}$ be the Stinespring dilation of the channel $\cN_{A\rar B}$.
We start with an introduction of quantum state merging with respect to the state $\varphi_{REB}=\id_R\ox U_\cN^{A\rar BE}(\varphi_{RA})$. Suppose that $\varphi_{REB}$ is held by a referee ($R$), Alice ($E$), and Bob ($B$). This is compactly represented by  $\varphi_{R:E:B}$. That is, when multiple systems are separated by two colons, it means that the three parts from left to right are respectively held by the referee, Alice and Bob. Based on the protocol introduced in~\cite{MBDRC2017catalytic, ADJ2017quantum}, a tight achievability bound has been derived in~\cite{LiYao2024reliability}, for merging the $B$ part of $\varphi_{R:E:B}$ to Alice. The merging scheme consists of the following steps:
\begin{enumerate}[~~~~(a$'$)]
  \item Alice and Bob share a pair of entangled pure state $\gamma_{A'B'}$ (here $\gamma_{A'B'}$ depends
        on $\sigma_B$, and it can be denoted as $\gamma_{\emptyset:A':B'}$, where $\emptyset$ stands for an empty set of systems);
  \item Bob applies a local unitary transformation $V_{BB'\rar B_1B_2}$, where the size of the the $B_2$
        system is $\log|B_2|=\frac{c}{2}$;
  \item Bob sends the $B_2$ system to Alice via the noiseless channel $\id_{B_2\rar \bB_2}$ with $B_2
        \cong \bB_2$;
  \item Alice applies a local unitary transformation $W_{EA'\bB_2\rar BEA_1}$.
\end{enumerate}
Write the resulting state as $\Psi_{R:BEA_1:B_1}=W\circ\id_{B_2\rar \bB_2}\circ V (\varphi_{R:E:B} \ox \gamma_{\emptyset:A':B'})$. Then there exists a pure entangled state $\omega_{A_1B_1}$ shared between Alice and Bob, such that~\cite[Theorem 7 and Proposition 11]{LiYao2024reliability}
\begin{equation}
     P(\Psi_{R:BEA_1:B_1},\varphi_{R:BE:\emptyset}\ox\omega_{\emptyset:A_1:B_1})
\leq \sqrt{\frac{v^s}{s}}\exp\Big\{-\frac{s}{2} \big(c-D_{1+s}
       (\varphi_{RB}\| \varphi_R\ox\sigma_B)\big)\Big\}, \label{eq:mergingb-a}
\end{equation}
where $s$ and $v$ are as specified in the statement of the proposition.

Now, for the input state $\varphi_{RA}$, the reverse Shannon simulation for $\cN_{A\rar B}$ is constructed essentially by reversing the above steps (this is called state splitting):
\begin{enumerate}[~~~~(a)]
  \item Alice and Bob share the entangled state $\omega_{A_1B_1}$ for assistance;
  \item Alice applies the Stinespring dilation $U_\cN^{A\rar BE}$ locally (now the whole state is
        $\varphi_{R:BE:\emptyset}\ox\omega_{\emptyset:A_1:B_1}$);
  \item Alice applies the unitary transformation $W^{-1}$;
  \item Alice implements the noiseless transmission $\id_{{\bB}_2 \rar B_2}$ to Bob via quantum
        teleportation, which consumes $c$ bits of classical communication;
  \item Bob applies the unitary transformation $V^{-1}$.
\end{enumerate}
We denote the resulting state as $\Phi_{R:EA':BB'}$. Then
\[
\Phi_{R:EA':BB'}=V^{-1}\circ\id_{\bB_2\rar B_2}\circ W^{-1}
(\varphi_{R:BE:\emptyset} \ox \omega_{\emptyset:A_1:B_1}),
\]
and we have
\begin{align}
  & P\big(\Phi_{R:EA':BB'}, \varphi_{R:E:B}\ox \gamma_{\emptyset:A':B'}\big) \nonumber\\
= & P\big(V^{-1}\circ\id_{\bB_2\rar B_2}\circ W^{-1}(\varphi_{R:BE:\emptyset} \ox
      \omega_{\emptyset:A_1:B_1}),\varphi_{R:E:B}\ox \gamma_{\emptyset:A':B'}\big) \nonumber\\
= & P\big(\varphi_{R:BE:\emptyset} \ox \omega_{\emptyset:A_1:B_1},
      W\circ\id_{B_2\rar \bB_2}\circ V (\varphi_{R:E:B} \ox \gamma_{\emptyset:A':B'}\big) \nonumber\\
= & P\big(\Psi_{R:BEA_1:B_1},\varphi_{R:BE:\emptyset} \ox \omega_{\emptyset:A_1:B_1}\big),
     \label{eq:mergingb-b}
\end{align}
where for the second equality we apply the unitary operation $W\circ\id_{B_2\rar \bB_2}\circ V$ to both states, and for the last equality we exchange the position of the two states. At last, the simulation map $\cM_{A\rar B}$ is obtained by discarding the systems $E$, $A'$ and $B'$ after the above steps (a)-(e). Thus,
\[
\cM_{A\rar B}(\varphi_{RA}) = \tr_{EA'B'}[\Phi_{R:EA':BB'}].
\]
So, we have
\begin{align*}
     & P\big(\cM(\varphi_{RA}),\cN(\varphi_{RA})\big) \\
  =  & P\big(\tr_{EA'B'}[\Phi_{R:EA':BB'}], \varphi_{R:\emptyset:B}\big) \\
\leq & P\big(\Phi_{R:EA':BB'}, \varphi_{R:E:B}\ox \gamma_{\emptyset:A':B'}\big) \\
\leq & \sqrt{\frac{v^s}{s}}\exp\Big\{-\frac{s}{2} \big(c-D_{1+s}
       (\varphi_{RB}\| \varphi_R\ox\sigma_B)\big)\Big\},
\end{align*}
where the first inequality is by the monotonicity of purified distance under partial trace, and the second inequality comes from the combination of Eq.~\eqref{eq:mergingb-a} and Eq.~\eqref{eq:mergingb-b}.
\end{IEEEproof}

\subsection{Finite-Blocklength Achievability Bound and Proof of Eq.~(\ref{eq:reliability-l})}
\label{subsec:finite-n}
We complete the proof of Theorem~\ref{thm:achievability} and Eq.~(\ref{eq:reliability-l}) of Theorem~\ref{thm:reliability} in this subsection. At first, we show the following lemma.
\begin{lemma}\label{lem:minimizer}
For the quantum state $\zeta_{R_nB^n}=\cN_{A\rar B}^{\ox n}(\zeta_{R_nA^n})$ with $\zeta_{R_nA^n}$ being the de Finetti state of Lemma~\ref{lem:deFa}, and for $s\geq-\frac{1}{2}$, let $\sigma^*_{B^n}$ be a minimizer of
\[
\min_{\sigma_{B^n}}D_{1+s}(\zeta_{R_nB^n}\| \zeta_{R_n}\ox\sigma_{B^n})=I_{1+s}(R_n:B^n)_{\zeta_{R_nB^n}}.
\]
Then $\sigma^*_{B^n}$ can be chosen to be symmetric, i.e., $\sigma^*_{B^n}=W^\pi_{B^n}(\sigma^*_{B^n})$ for all $\pi\in S_n$.
\end{lemma}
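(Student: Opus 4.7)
The plan is a standard symmetrization argument. I aim to show that the objective function $f(\sigma_{B^n}):=D_{1+s}(\zeta_{R_nB^n}\|\zeta_{R_n}\ox\sigma_{B^n})$ is invariant under the $S_n$-action $\sigma_{B^n}\mapsto W^\pi_{B^n}\sigma_{B^n}W^{\pi^{-1}}_{B^n}$ on the $B^n$-side. Once this is established, for any minimizer $\sigma^*_{B^n}$ the averaged state
\[
\bar\sigma^*_{B^n}:=\frac{1}{|S_n|}\sum_{\pi\in S_n}W^\pi_{B^n}\sigma^*_{B^n}W^{\pi^{-1}}_{B^n}
\]
is manifestly symmetric, and is again a minimizer by the (quasi-)convexity of $\sigma\mapsto D_\alpha(\rho\|\rho_R\ox\sigma)$ in the range $\alpha\geq 1/2$. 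This convexity is a standard property of the sandwiched R\'enyi divergence: joint convexity for $\alpha\in[1/2,1]$ (Frank--Lieb), while for $\alpha>1$ the convexity of $\sigma\mapsto Q_\alpha(\rho\|\sigma)$---via operator convexity of $\sigma\mapsto\sigma^{(1-\alpha)/\alpha}$---gives quasi-convexity of $D_\alpha$ in the second argument; either suffices for the averaging step.

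The crux is therefore the invariance. The idea is to produce, for each $\pi\in S_n$, a unitary $U^\pi_{R_n}$ on $R_n$ satisfying (i) $(U^\pi_{R_n}\ox W^\pi_{A^n})\zeta_{R_nA^n}(U^{\pi^{-1}}_{R_n}\ox W^{\pi^{-1}}_{A^n})=\zeta_{R_nA^n}$, and (ii) $U^\pi_{R_n}\zeta_{R_n}U^{\pi^{-1}}_{R_n}=\zeta_{R_n}$. Using the explicit description $R_n\equiv\tR\bR^n$ from Section~4, I will simply take $U^\pi_{R_n}:=\id_{\tR}\ox W^\pi_{\bR^n}$. Property (i) then reduces to the observation that $\ket{\zeta}_{\tR\bR^nA^n}$ is a superposition of basis vectors of $\mathrm{Sym}(\cH_{\bR A}^{\ox n})$ (tensored with an orthonormal family on $\tR$), so the joint permutation $\id_{\tR}\ox W^\pi_{\bR^n}\ox W^\pi_{A^n}$ acts trivially; property (ii) follows by tracing (i) over $A^n$ and using cyclicity of the partial trace. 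Combining (i) with the permutation-covariance $\cN^{\ox n}\circ W^\pi_{A^n}=W^\pi_{B^n}\circ\cN^{\ox n}$ then gives $(U^\pi_{R_n}\ox W^\pi_{B^n})\zeta_{R_nB^n}(U^{\pi^{-1}}_{R_n}\ox W^{\pi^{-1}}_{B^n})=\zeta_{R_nB^n}$. The unitary invariance of $D_{1+s}$ together with (ii) then yields
\begin{equation*}
f(\sigma_{B^n}) = D_{1+s}\bigl(\zeta_{R_nB^n}\bigm\|\zeta_{R_n}\ox W^\pi_{B^n}\sigma_{B^n}W^{\pi^{-1}}_{B^n}\bigr) = f\bigl(W^\pi_{B^n}\sigma_{B^n}W^{\pi^{-1}}_{B^n}\bigr).
\end{equation*}

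No major obstacle is anticipated: the only subtle point is identifying the correct intertwining unitary $U^\pi_{R_n}$, so that the $A^n$-permutation transported through $\cN^{\ox n}$ becomes a genuine $B^n$-permutation while still leaving $\zeta_{R_n}$ invariant; the symmetric-subspace form of the purification $\zeta_{\tR\bR^nA^n}$ furnishes this choice automatically. The convexity-based averaging that produces $\bar\sigma^*_{B^n}$ is then immediate.
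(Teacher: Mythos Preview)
Your proof is correct and follows essentially the same two-step strategy as the paper: establish that $D_{1+s}(\zeta_{R_nB^n}\|\zeta_{R_n}\ox\sigma_{B^n})$ is invariant under $\sigma_{B^n}\mapsto W^\pi_{B^n}(\sigma_{B^n})$, then average using convexity of the divergence in its second argument. The only difference is in how the intertwining unitary on $R_n$ is produced: you construct it explicitly as $\id_{\tR}\ox W^\pi_{\bR^n}$ from the symmetric-subspace description of $\ket{\zeta}_{\tR\bR^nA^n}$, whereas the paper invokes Uhlmann's theorem abstractly (since $W^\pi_{A^n}(\zeta_{R_nA^n})$ and $\zeta_{R_nA^n}$ share the same marginal on $A^n$, there exists a unitary $V_{R_n}$ relating them); both routes yield the same invariance identity.
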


\begin{IEEEproof}
Because $\zeta_{A^n}$ is invariant under permutations, for any $\pi\in S_n$ the pure states $W^\pi_{A^n}(\zeta_{R_nA^n})$ and $\zeta_{R_nA^n}$ are identical on the $A^n$ system. So, there is a unitary operation $V_{R_n}$ such that
\begin{equation}\label{eq:minimizer-1}
V_{R_n}\ox W^\pi_{A^n}(\zeta_{R_nA^n})=\zeta_{R_nA^n}.
\end{equation}
Applying the channel $\cN_{A\rar B}^{\ox n}$ to both sides of Eq.~\eqref{eq:minimizer-1}, due to the symmetry that $W^{\pi}_{B^n}\circ\cN_{A\rar B}^{\ox n}=\cN_{A\rar B}^{\ox n}\circ W^\pi_{A^n}$, we get
\begin{equation}\label{eq:minimizer-2}
V_{R_n}\ox W^\pi_{B^n}(\zeta_{R_nB^n})=\zeta_{R_nB^n}.
\end{equation}
Eq.~\eqref{eq:minimizer-2} ensures that we also have $V_{R_n}(\zeta_{R_n})=\zeta_{R_n}$. With these two relations we make use of the property that the sandwiched R\'enyi divergence is invariant under unitary operations to obtain, for any permutation $\pi$ and quantum state $\sigma_{B^n}$,
\begin{align}
 &D_{1+s}\left(\zeta_{R_nB^n}\| \zeta_{R_n}\ox\sigma_{B^n}\right) \nb \\
=&D_{1+s}\big(V_{R_n}\ox W^\pi_{B^n}(\zeta_{R_nB^n})\| V_{R_n}(\zeta_{R_n})\ox
  W^\pi_{B^n}(\sigma_{B^n})\big) \nb\\
=&D_{1+s}\big(\zeta_{R_nB^n}\| \zeta_{R_n}\ox W^\pi_{B^n}(\sigma_{B^n})\big). \label{eq:minimizer-3}
\end{align}
Let $\tilde{\sigma}_{B^n}=\sum_{\pi\in S_n}\frac{1}{|S_n|}W^\pi_{B^n}(\sigma_{B^n})$ be the symmetrization of $\sigma_{B^n}$. Then
\begin{align}
     &D_{1+s}\left(\zeta_{R_nB^n}\| \zeta_{R_n}\ox\tilde{\sigma}_{B^n}\right) \nb \\
\leq &\sum_{\pi\in S_n}\frac{1}{|S_n|}D_{1+s}\big(\zeta_{R_nB^n}\| \zeta_{R_n}\ox
             W^\pi_{B^n}(\sigma_{B^n})\big) \nb\\
  =  &\sum_{\pi\in S_n}\frac{1}{|S_n|}D_{1+s}\left(\zeta_{R_nB^n}\| \zeta_{R_n}\ox\sigma_{B^n}\right)\nb \\ =  &D_{1+s}\left(\zeta_{R_nB^n}\| \zeta_{R_n}\ox\sigma_{B^n}\right), \label{eq:minimizer-4}
\end{align}
where the inequality is by the convexity of the function $\omega\mapsto D_{1+s}(\rho\|\omega)$ for $s\geq-\frac{1}{2}$ and $\rho,\omega\in\mathcal{S}(\mathcal{H})$~\cite{MosonyiOgawa2017strong}, and the first equality is by Eq.~\eqref{eq:minimizer-3}. Eq.~\eqref{eq:minimizer-4} implies the statement and we are done.
\end{IEEEproof}

\bigskip
\begin{IEEEproof}[Proof of Theorem~\ref{thm:achievability}]
Since we can always symmetrize the simulation channel, according to Lemma~\ref{lem:deFa}, it suffices to consider the de Finetti input state $\zeta_{R_nA^n}$. Let $\zeta_{R_nB^n}=\cN_{A\rar B}^{\ox n}(\zeta_{R_nA^n})$. From now on, we fix an $s\in(0,1]$. As shown in Lemma~\ref{lem:minimizer}, we can choose a symmetric (permutation-invariant) state $\sigma^*_{B^n}$ to minimize $D_{1+s}(\zeta_{R_nB^n}\| \zeta_{R_n}\ox\sigma_{B^n})$, such that
\begin{equation} \label{eq:achib-1}
D_{1+s}(\zeta_{R_nB^n}\| \zeta_{R_n}\ox\sigma^*_{B^n})
=I_{1+s}(R_n:B^n)_{\cN_{A\rar B}^{\ox n}(\zeta_{R_nA^n})}.
\end{equation}
Denote the number of distinct eigenvalues of a quantum state $\rho$ as $|\operatorname{spec}(\rho)|$, where $\operatorname{spec}(\cdot)$ stands for the spectral set. For a symmetric state $\rho_{X^n}$, $|\operatorname{spec}(\rho_{X^n})|$ is polynomial in $n$, upper bounded by the sum of the dimensions of all the irreducible decompositions of the representation $u\mapsto u^{\ox n}$ of the unitary group on  $\mathcal{H}_X^{\ox n}$. An explicit bound of $(n+1)^{\frac{(|X|+2)(|X|-1)}{2}}$ can be found in, e.g.,~\cite{Hayashi2010universal}. So, we have
\begin{equation*}
|\operatorname{spec}(\zeta_{R_n}\ox\sigma^*_{B^n})|
\leq |\operatorname{spec}(\zeta_{A^n})|\cdot |\operatorname{spec}(\sigma^*_{B^n})|
\leq (n+1)^{(d+2)(d-1)},
\end{equation*}
where $d=\max\{|A|,|B|\}$, and for the first inequality we have used $|\operatorname{spec}(\zeta_{R_n})|
=|\operatorname{spec}(\zeta_{A^n})|$ since $\zeta_{R_nA^n}$ is a pure state.

Now, we apply Proposition~\ref{prop:achifixin} to the channel $\cN_{A\rar B}^{\ox n}$ with the fixed input state $\zeta_{R_nA^n}$. Set $c=nr$ and $\sigma_{B^n}=\sigma^*_{B^n}$. According to Proposition~\ref{prop:achifixin}, there is a reverse Shannon simulation ${\cM'}_{A^n\rar B^n} $ with $c=nr$ bits of classical communication, such that for $s$ fixed earlier,
\begin{align}
     &P\left({\cM'}_{A^n\rar B^n} (\zeta_{R_nA^n}),\cN_{A\rar B}^{\ox n}(\zeta_{R_nA^n})\right)\nb\\
\leq &\sqrt{\frac{(n+1)^{(d+2)(d-1)s}}{s}}\exp\Big\{-\frac{s}{2} \big(nr-D_{1+s}
      (\zeta_{R_nB^n}\| \zeta_{R_n}\ox\sigma^*_{B^n})\big)\Big\} \nb\\
  =  &\sqrt{\frac{(n+1)^{(d+2)(d-1)s}}{s}}\exp\Big\{-\frac{s}{2} \big(nr
      -I_{1+s}(R_n:B^n)_{\cN_{A\rar B}^{\ox n}(\zeta_{R_nA^n})}\big)\Big\}, \label{eq:achib-3}
\end{align}
where the equality is by Eq.~\eqref{eq:achib-1}. Next, we symmetrize ${\cM'}_{A^n\rar B^n} $ to obtain a symmetric simulation $\cM_{A^n\rar B^n} $. This is the final simulation scheme that we need. The symmetrization consumes some classical randomness which can be obtained by making measurements on the shared entanglement. Meanwhile, it does not cause loss of performance, which can be verified by an argument similar to the proof of Lemma~\ref{lem:symmetrization}:
\begin{align}
     &P\left({\cM}_{A^n\rar B^n} (\zeta_{R_nA^n}),\cN_{A\rar B}^{\ox n}(\zeta_{R_nA^n})\right)\nb\\
  =  &P\Big(\sum_{\pi\in S_n}\frac{1}{|S_n|}W_{B^n}^{\pi^{-1}}\circ{\cM'}_{A^n\rar B^n} \circ
       W_{A^n}^\pi(\zeta_{R_nA^n}),\cN_{A\rar B}^{\ox n}(\zeta_{R_nA^n})\Big)\nb\\
\leq &\max_{\pi\in S_n}P\left(W_{B^n}^{\pi^{-1}}\circ{\cM'}_{A^n\rar B^n} \circ
       W_{A^n}^\pi(\zeta_{R_nA^n}),\cN_{A\rar B}^{\ox n}(\zeta_{R_nA^n})\right)\nb\\
  =  &P\left({\cM'}_{A^n\rar B^n} (\zeta_{R_nA^n}),\cN_{A\rar B}^{\ox n}(\zeta_{R_nA^n})\right),
       \label{eq:achib-4}
\end{align}
where the inequality follows from the concavity of the fidelity together with the definition of purified distance, and the last line is because both $\zeta_{A^n}$ and $\cN_{A\rar B}^{\ox n}$ are invariant under permutation and we also use a trick of Eq.~\eqref{eq:minimizer-1}. At this stage, we are ready to apply the de Finetti reduction of Lemma~\ref{lem:deFa} to get
\begin{align}
      P^{\rm{sim}}\left(\cN^{\ox n}, nr\right)
\leq &P\left({\cM}_{A^n\rar B^n} ,\cN_{A\rar B}^{\ox n}\right)\nb\\
\leq &\sqrt{g_{n,|A|}}P\left({\cM}_{A^n\rar B^n} (\zeta_{R_nA^n}),
                       \cN_{A\rar B}^{\ox n}(\zeta_{R_nA^n})\right). \label{eq:achib-5}
\end{align}
Combining Eqs.~(\ref{eq:achib-3})--(\ref{eq:achib-5}), we arrive at
\begin{equation}
P^{\rm{sim}}\left(\cN^{\ox n}, nr\right)
\leq \sqrt{\frac{(n+1)^{(1+s)(d+1)^2}}{s}}\exp\Big\{-\frac{s}{2}
\big(nr-I_{1+s}(R_n:B^n)_{\cN_{A\rar B}^{\ox n}(\zeta_{R_nA^n})}\big)\Big\}.\label{eq:achib-6}
\end{equation}
Eq.~\eqref{eq:achib-6} holds for any $s\in(0,1]$, because it was chosen arbitrarily at the beginning of the proof.

The last step is to employ the additivity property of the channel's sandwiched R\'enyi information established in~\cite{GuptaWilde2015multiplicativity}: for $\alpha\geq 1$ and any two channels $\cE_1$ and $\cE_2$,
$I_\alpha(\cE_1\ox\cE_2)=I_\alpha(\cE_1)+I_\alpha(\cE_2)$.
This yields
\begin{equation}\label{eq:achib-7}
I_{1+s}(R_n:B^n)_{\cN_{A\rar B}^{\ox n}(\zeta_{R_nA^n})}
\leq I_{1+s}(\cN_{A\rar B}^{\ox n})
 =   nI_{1+s}(\cN_{A\rar B}).
\end{equation}
Inserting Eq.~\eqref{eq:achib-7} into Eq.~\eqref{eq:achib-6} lets us complete the proof.
\end{IEEEproof}

\bigskip
\begin{IEEEproof}[Proof of Theorem~\ref{thm:reliability}: Eq.~(\ref{eq:reliability-l})]
It is a straightforward consequence of Theorem~\ref{thm:achievability} that
\[
\liminf_{n\rar\infty} \frac{-1}{n} \log P^{\rm{sim}}\big(\cN_{A \rar B}^{\ox n}, nr\big)
\geq \frac{1}{2} \big\{ s\big(r-I_{1+s}(N_{A\rar B})\big) \big\}
\]
holds for any $s\in[0,1]$. Maximizing the right hand side over $s\in[0,1]$ yields
\[E^{\rm{sim}}(\cN_{A \rar B}, r) \geq \frac{1}{2} \max_{0 \leq s \leq 1}
\big\{s\big(r-I_{1+s}(\mathcal{N}_{A \rightarrow B})\big)\big\},
\]
which is Eq.~(\ref{eq:reliability-l}).
\end{IEEEproof}

\section{Converse Bounds}
\label{sec:Proofs-converse}
In this section, we deal with the converse part. At first, in Section~\ref{subsec:one-shot-conv} we prove the one-shot converse bound of Theorem~\ref{thm:converse}. Then, based on this we finish the proof of Theorem~\ref{thm:reliability} in Section~\ref{subsec:reliability-conv}.

\subsection{One-Shot Converse Bound}
\label{subsec:one-shot-conv}
Inspecting Eq.~\eqref{eq:smooths2}, we can see that if $\omega$ is a normalized state and $\lambda\geq 0$, then $\tilde{\varrho}$ can be restricted to be a normalized state. This implies that the converse bound of Theorem~\ref{thm:converse} can be written as
\begin{equation} \label{eq:smooths0}
\delta_{R:B}(\cN(\varphi_{RA}),c)
=\min \left\{P(\tilde{\rho}_{RB},\cN(\varphi_{RA})):\tilde{\rho}_{RB}\in\cS(RB),\sigma_B\in\cS(B),
\tilde{\rho}_{RB}\leq 2^c\varphi_R\ox\sigma_B\right\}.
\end{equation}
In the following, we show that $\delta_{R:B}(\cN(\varphi_{RA}),c)$ is a lower bound of the optimal performance for simulating $\cN_{A\rar B}$.

\begin{lemma}
  \label{lemma:converse-a}
Let $\cN_{A\rar B}$ be a quantum channel and $c \geq 0$ be a real number. For any bipartite state $\varphi_{RA}$ we have
\begin{align*}
  P^{\rm{sim}}(\mathcal{N}_{A \rightarrow B}, c) \geq \min_{\cM\in\mathfrak{S}_c}P(\cM(\varphi_{RA}),\cN(\varphi_{RA})),
\end{align*}
where $\mathfrak{S}_c$ is the set of all the reverse Shannon simulations for $\mathcal{N}_{A \rightarrow B}$ whose classical communication cost does not exceed $c$ bits.
\end{lemma}

\begin{IEEEproof}
It is straightforward to check that
\begin{align*}
      P^{\rm{sim}}(\mathcal{N}_{A \rightarrow B}, c)
  =  &\min_{\cM\in\mathfrak{S}_c} P(\cM_{A\rar B}, \cN_{A\rar B}) \\
  =  &\min_{\cM\in\mathfrak{S}_c} \max_{\varphi'_{RA}} P(\cM(\varphi'_{RA}),\cN(\varphi'_{RA})) \\
\geq &\max_{\varphi'_{RA}}\min_{\cM\in\mathfrak{S}_c} P(\cM(\varphi'_{RA}),\cN(\varphi'_{RA})) \\
\geq &\min_{\cM\in\mathfrak{S}_c} P(\cM(\varphi_{RA}),\cN(\varphi_{RA})),
\end{align*}
where the two equalities are by definitions.
\end{IEEEproof}

\begin{lemma}
  \label{lemma:converse-b}
Let $\cN_{A\rar B}$ be a quantum channel and $c\geq 0$ be a real number. For any reverse Shannon simulation $\cM_{A\rar B}$ for $\cN_{A\rar B}$ with classical communication not more than $c$ bits, and any bipartite state $\varphi_{RA}$ we have
\begin{align*}
  P(\cM(\varphi_{RA}),\cN(\varphi_{RA})) \geq \delta_{R:B}(\cN(\varphi_{RA}),c).
\end{align*}
\end{lemma}

\begin{IEEEproof}
Let $\cM_{A\rar B}$ be a general reverse Shannon simulation with not more than $c$ bits of classical communication. It consists of (a) using a shared bipartite entangled state $\omega_{A_1B_1}$ for assistance, (b) applying local operation $\cE_{AA_1\rar X}$ with $\log |X|\leq c$ at Alice's side, (c) sending $X$ from Alice to Bob via a noiseless classical channel $\overline{\id}_{X\rar X}:\ \rho_X\mapsto \sum_x\bra{x}\rho_X\ket{x}\proj{x}_X$ with $\{\ket{x}\}_x$ a standard basis of $\cH_X$, (d) applying local operation $\cD_{XB_1\rar B}$ at Bob's side.

Given the input state $\varphi_{RA}$, we set
\[
\Phi_{RXB_1}:=\overline{\id}_{X\rar X}\circ\cE_{AA_1\rar X}(\varphi_{RA}\ox\omega_{A_1B_1}),
\]
which is the state after the implementing of steps (a), (b) and (c). Here the systems $X$ and $B_1$ are held by Bob. The state $\Phi_{RXB_1}$ is classical in $X$, i.e., it can be written in the following form:
\[
\Phi_{RXB_1}=\sum_{x}\proj{x}_X\ox p_x \Phi_{RB_1}^x.
\]
Therefore,
\[
\Phi_{RXB_1}\leq\sum_{x}\proj{x}_X\ox \big(\sum_xp_x \Phi_{RB_1}^x\big)=\1_X\ox\Phi_{RB_1}.
\]
Moreover, it is obvious that $\Phi_{RB_1}=\varphi_R\ox\omega_{B_1}$. So, we have
\[
\Phi_{RXB_1}\leq 2^c\varphi_R\ox\big(\frac{\1_X}{|X|}\ox\omega_{B_1}\big).
\]
The above inequality then ensures that the resulting state of the simulation satisfies
\begin{align}
  \cM_{A\rar B}(\varphi_{RA})=&\cD_{XB_1\rar B}(\Phi_{RXB_1}) \nonumber \\
                         \leq &2^c\varphi_R\ox \cD_{XB_1\rar B}\big(\frac{\1_X}{|X|}\ox\omega_{B_1}\big).
                         \label{eq:converse-p1}
\end{align}
Since $\cD_{XB_1\rar B}$ is a CPTP map, $\cD_{XB_1\rar B}\big(\frac{\1_X}{|X|}\ox\omega_{B_1}\big)$ is a normalized state on system $B$. At last, by the definition of $\delta_{R:B}(\cN(\varphi_{RA}),c)$ (cf. Eq.~\eqref{eq:smooths0}), we easily see that Eq.~(\ref{eq:converse-p1}) implies
\[
P(\cM(\varphi_{RA}),\cN(\varphi_{RA})) \geq \delta_{R:B}\left(\cN(\varphi_{RA}),c\right),
\]
concluding the proof.
\end{IEEEproof}

\bigskip
\begin{IEEEproof}[Proof of Theorem~\ref{thm:converse}]
The proof is done by combining Lemma~\ref{lemma:converse-a} and Lemma~\ref{lemma:converse-b}.
\end{IEEEproof}

\subsection{Final Proof of the Reliability Function}
\label{subsec:reliability-conv}
We are now in a position to finish the proof of Theorem~\ref{thm:reliability}. For convenience, we define for any density matrix $\rho_A\in\cS(A)$,
\begin{equation*}
I_\alpha(\cN_{A\rar B},\rho_A):=I_\alpha(R:B)_{\cN(\rho_{RA})},
\end{equation*}
where $\rho_{RA}$ is any purification of $\rho_A$. This definition is reasonable because any purification of $\rho_A$ gives the same value for $I_\alpha(R:B)_{\cN(\rho_{RA})}$. The following Lemma~\ref{lem:concavity} will be crucial for our derivation.

\begin{lemma}
  \label{lem:concavity}
Let $\cN_{A\rar B}$ be a quantum channel. For any $\alpha>1$, the function $\rho_A \mapsto I_\alpha(\cN_{A\rar B},\rho_A)$ is concave on the set $\cS(A)$ of density matrices on $\cH_A$.
Furthermore, for any $\alpha\geq\frac{1}{2}$, it is continuous.
\end{lemma}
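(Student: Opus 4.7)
The plan is to use the standard purification-plus-dephasing trick, adapted to the sandwiched Rényi setting. Given a convex combination $\rho_A=\sum_i\lambda_i\rho_A^i$ (it suffices to treat finite mixtures by continuity), I would introduce a single purification of $\rho_A$ that exposes the decomposition: with $|\psi_i\rangle_{R_0A}$ a purification of $\rho_A^i$ in a common ancilla $R_0$, and $X$ a classical flag register with orthonormal basis $\{|i\rangle\}$, set $|\phi\rangle_{R_0XA}:=\sum_i\sqrt{\lambda_i}|\psi_i\rangle_{R_0A}|i\rangle_X$. A quick partial trace check shows $\phi_A=\sum_i\lambda_i\rho_A^i=\rho_A$, so $\phi$ is a purification of $\rho_A$, and consequently $I_\alpha(\cN,\rho_A)=I_\alpha(R_0X:B)_{\cN(\phi)}$. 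The point of bundling the flag into the reference is that dephasing $X$ then acts only on the reference side, where $I_\alpha$ enjoys data-processing monotonicity.

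Next I would dephase $X$ in the basis $\{|i\rangle\}$ via the CPTP map $\Delta_X$. Data processing for $I_\alpha$ (valid for $\alpha\geq 1/2$ via monotonicity of the sandwiched Rényi divergence under CPTP maps) gives $I_\alpha(R_0X:B)_{\cN(\phi)}\geq I_\alpha(R_0X:B)_{\rho_{XR_0B}}$, where the dephased state is the cq state $\rho_{XR_0B}=\sum_i\lambda_i|i\rangle\!\langle i|_X\otimes \cN(\psi_i)_{R_0B}$. For this cq state, the sandwiched divergence from any product reference $\rho_{XR_0}\otimes\sigma_B$ decomposes block-diagonally in $X$, yielding
\begin{equation*}
D_\alpha(\rho_{XR_0B}\|\rho_{XR_0}\otimes\sigma_B)=\frac{1}{\alpha-1}\log\sum_i\lambda_i\, 2^{(\alpha-1) D_\alpha(\cN(\psi_i)\|\psi^i_{R_0}\otimes\sigma_B)},
\end{equation*}
where $\psi^i_{R_0}$ is the reduced state of $|\psi_i\rangle_{R_0A}$ on $R_0$. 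Because $\alpha>1$, Jensen's inequality applied to the convex function $t\mapsto 2^{(\alpha-1)t}$ lower-bounds this by $\sum_i\lambda_i D_\alpha(\cN(\psi_i)\|\psi^i_{R_0}\otimes\sigma_B)$. Taking the minimum over $\sigma_B$ on both sides and using the elementary estimate $\min_\sigma\sum_i\lambda_i f_i(\sigma)\geq\sum_i\lambda_i\min_\sigma f_i(\sigma)$ produces $\sum_i\lambda_i I_\alpha(\cN,\rho_A^i)$ on the right, finishing the chain.

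I expect the actual obstacle to be minimal: the cq reduction is routine and the minimax inequality is trivial. The only conceptually delicate step is Jensen's inequality, and it is precisely there that the hypothesis $\alpha>1$ enters (the direction of convexity of $2^{(\alpha-1)t}$ would flip for $\alpha<1$, and the argument would break). The one ingredient I would import rather than re-derive is data processing of the sandwiched Rényi divergence on the reference side, which I would cite from \cite{FrankLieb2013monotonicity, Beigi2013sandwiched}. If one wished to avoid even this, an alternative would be to invoke the Gupta--Wilde variational characterization of $I_\alpha$ and verify joint concavity of the auxiliary functional in the input directly; but the dephasing route is substantially shorter and more transparent.
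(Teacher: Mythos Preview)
Your proposal is correct and follows essentially the same route as the paper: purify the convex combination with a coherent flag register, dephase the flag, invoke data processing for the sandwiched R\'enyi divergence (citing \cite{FrankLieb2013monotonicity,Beigi2013sandwiched}), use the block-diagonal decomposition of $Q_\alpha$, apply Jensen/log-concavity, and split the minimum over $\sigma_B$. The only cosmetic differences are that the paper treats the two-term case $\lambda\rho_A+(1-\lambda)\sigma_A$ and phrases the Jensen step as concavity of $\log$ applied to $Q_\alpha$, whereas you handle general finite mixtures and phrase it via convexity of $t\mapsto 2^{(\alpha-1)t}$; these are equivalent.
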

\begin{IEEEproof}
At first, we prove the concavity. Let $\rho_A, \sigma_A\in\cS(A)$ be density matrices, and let $\psi_{RA}$ and $\phi_{RA}$ be the purifications of $\rho_A$ and $\sigma_A$, respectively. For any $0\leq\lambda\leq1$, consider the convex combination $\lambda\rho_A+(1-\lambda)\sigma_A$. By introducing another system $F$, its purification can be written as
\[
\ket{\varphi}_{FRA}=\sqrt{\lambda}\ket{0}_F\ket{\psi}_{RA}+\sqrt{1-\lambda}\ket{1}_F\ket{\phi}_{RA}.
\]
Let $\cE_F:X\mapsto\bra{0}X\ket{0}\proj{0}_F+\bra{1}X\ket{1}\proj{1}_F$ be the measurement map acting on the system $F$. Then
\[
\cE_F(\varphi_{FRA})=\lambda\proj{0}_F\ox\psi_{RA}+(1-\lambda)\proj{1}_F\ox\phi_{RA}.
\]
Define for operators $X,Y\geq0$, $Q_\alpha(X\|Y):=\tr\big(Y^{\frac{1-\alpha}{2\alpha}}XY^{\frac{1-\alpha}{2\alpha}}\big)^\alpha$. We have
\begin{align*}
     &I_\alpha(\cN_{A\rar B},\lambda\rho_A+(1-\lambda)\sigma_A) \\
  =  &I_\alpha(FR:B)_{\cN_{A\rar B}(\varphi_{FRA})}             \\
  =  &\min_{\omega_B}D_\alpha\big(\cN_{A\rar B}(\varphi_{FRA})\|\varphi_{FR}\ox \omega_B\big) \\
\geq &\min_{\omega_B}D_\alpha\big(\cE_F\ox\cN_{A\rar B}(\varphi_{FRA})\|\cE_F(\varphi_{FR})\ox
                                                                               \omega_B\big) \\
  =  &\min_{\omega_B}D_\alpha\big(\lambda\proj{0}_F\ox\cN(\psi_{RA})+(1-\lambda)\proj{1}_F\ox\cN(\phi_{RA})
            \|                                                                               \\
     &\qquad\quad\ \ \lambda\proj{0}_F\ox\psi_R\ox\omega_B+(1-\lambda)\proj{1}_F\ox\phi_R\ox\omega_B\big)\\
  =  &\min_{\omega_B}\frac{1}{\alpha-1}\log\big(\lambda Q_\alpha(\cN(\psi_{RA})\|\psi_R\ox\omega_B) +
                                            (1-\lambda)Q_\alpha(\cN(\phi_{RA})\|\phi_R\ox\omega_B) \big) \\
\geq &\min_{\omega_B} \frac{1}{\alpha-1}\big\{\lambda\log Q_\alpha(\cN(\psi_{RA})\|\psi_R\ox\omega_B)
                                  +(1-\lambda)\log Q_\alpha(\cN(\phi_{RA})\|\phi_R\ox\omega_B)\big\} \\
  =  &\min_{\omega_B}\big\{\lambda D_\alpha\big(\cN(\psi_{RA})\|\psi_R\ox\omega_B\big)
                          +(1-\lambda)D_\alpha\big(\cN(\phi_{RA})\|\phi_R\ox\omega_B\big)\big\} \\
\geq &\lambda \min_{\omega_B} D_\alpha\big(\cN(\psi_{RA})\|\psi_R\ox\omega_B\big)
                          +(1-\lambda)\min_{\omega_B} D_\alpha\big(\cN(\phi_{RA})\|\phi_R\ox\omega_B\big)\\
  =  &\lambda I_\alpha(\cN_{A\rar B},\rho_A) + (1-\lambda) I_\alpha(\cN_{A\rar B},\sigma_A),
\end{align*}
where the first inequality is by the data processing inequality of the sandwiched R\'enyi divergence~\cite{Beigi2013sandwiched,FrankLieb2013monotonicity}.

Now, we show the continuity. Assume that $R\cong A$, with orthonormal bases $\{\ket{i}_R\}$ and $\{\ket{i}_A\}$, respectively. Let $\Phi_{RA}=\sum_{i,j=1}^{|A|}\ket{i}\bra{j}_R\ox\ket{i}\bra{j}_A$. Then $\sqrt{\rho_R}^\cT\Phi_{RA}\sqrt{\rho_R}^\cT$ is a purification of $\rho_A$, where ``$\cT$'' is the transpose operation. Denote by $\cS_+(B)$ the set of all positive states on $B$. We have
\[
I_{\alpha}(\cN_{A \rightarrow B}, \rho_A)
=\inf_{\sigma_B \in \cS_+(B)}
D_{\alpha}\big(\sqrt{\rho_R}^\cT\cN(\Phi_{RA})\sqrt{\rho_R}^\cT \| \rho_R^\cT \ox \sigma_B \big)
\]
and $D_{\alpha}\big(\sqrt{\rho_R}^\cT\cN(\Phi_{RA})\sqrt{\rho_R}^\cT \| \rho_R^\cT \ox \sigma_B \big)$ is
continuous in $\rho$ for any $\sigma_B \in \cS_+(B)$. Hence, $I_{\alpha}(\cN_{A \rightarrow B}, \rho_A)$ is upper semi-continuous in $\rho$. The remaining is to prove the lower semi-continuity.
Let $V_{A \rightarrow BE}$ be the Stinespring dilation for $\cN_{A \rightarrow B}$ and
$\cN^c_{A \rightarrow E}:\rho_A\mapsto\tr_B[V_{A \rightarrow BE}(\rho_A)V_{A \rightarrow BE}^*]$ be the complementary channel of $\cN_{A \rightarrow B}$. The duality relation for the sandwiched R\'enyi mutual information~\cite{HayashiTomamichel2016correlation} gives that
\begin{equation}\label{eq:conti}
I_{\alpha}(\cN_{A \rightarrow B}, \rho_A)
=\sup_{\sigma_E\in \cS_+(E)}
-D_{\beta}\big( \sqrt{\rho_R}^\cT\cN^c(\Phi_{RA})\sqrt{\rho_R}^\cT\| (\rho_R^\cT)^{-1} \ox \sigma_E \big),
\end{equation}
where $\frac{1}{\alpha}+\frac{1}{\beta}=2$ and $(\rho_R^\cT)^{-1}$ is the generalized inverse of $\rho_R^\cT$. Since $D_{\beta}( \sqrt{\rho_R}^\cT\cN^c(\Phi_{RA})\sqrt{\rho_R}^\cT\| (\rho_R^\cT)^{-1} \ox \sigma_E )$ is continuous in $\rho$, the lower semi-continuity of $I_{\alpha}(\cN_{A \rightarrow B}, \rho_A)$ follows directly from Eq.~(\ref{eq:conti}).
\end{IEEEproof}

\medskip
We also need the following technical lemma.
\begin{lemma}
  \label{lem:refunprop}
Let $\cN_{A\rar B}$ be a quantum channel. The function $s\mapsto sI_{1+s}(\cN_{A\rar B})$ is monotonically increasing and convex on $[0,\infty)$.
\end{lemma}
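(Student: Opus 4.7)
The plan is to establish monotonicity and convexity separately, treating each as an inheritance of well-known properties of the sandwiched R\'enyi divergence under the sup/inf operations that define $I_\alpha(\cN)$.

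\emph{Monotonicity.} The pointwise map $\alpha \mapsto D_\alpha(\rho\|\sigma)$ is nondecreasing on $(0,\infty)$ for any pair of states $\rho,\sigma$, and it satisfies $D_\alpha(\rho\|\sigma)\geq 0$ for $\alpha\geq 1$; these are standard properties of the sandwiched divergence (see \cite{MDSFT2013on,WWY2014strong}). Since $\inf_{\sigma_B}$ of a family of nondecreasing (resp.\ nonnegative) functions of $\alpha$ is nondecreasing (resp.\ nonnegative), and the same holds for $\sup_{\varphi_{RA}}$, these two properties are inherited by $I_\alpha(R:B)_{\cN(\varphi_{RA})}$ and then by $I_\alpha(\cN)$. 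Thus $s\mapsto I_{1+s}(\cN)$ is nonnegative and nondecreasing on $[0,\infty)$, so $s\mapsto s\,I_{1+s}(\cN)$ is the product of two nonnegative nondecreasing functions and is itself nondecreasing.

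\emph{Convexity.} The starting point is the known fact that for any fixed bipartite state $\rho_{RB}$ the ``optimized'' quantity
\[
s\,I_{1+s}(R:B)_{\rho_{RB}}
=\min_{\sigma_B}\log Q_{1+s}(\rho_{RB}\,\|\,\rho_R\ox\sigma_B)
\]
is convex in $s\geq 0$; this is due to Hayashi--Tomamichel~\cite{HayashiTomamichel2016correlation} and is exactly the input already invoked in the proof of Theorem~\ref{thm:reliability}. Applied to $\rho_{RB}=\cN(\varphi_{RA})$ with $\varphi_{RA}$ a purification of $\varphi_A\in\cS(A)$, it tells us that $s\mapsto s\,I_{1+s}(\cN,\varphi_A)$ is convex on $[0,\infty)$ for every fixed $\varphi_A$. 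For $s\geq 0$ we may write
\[
s\,I_{1+s}(\cN)
=s\max_{\varphi_A\in\cS(A)}I_{1+s}(\cN,\varphi_A)
=\max_{\varphi_A\in\cS(A)}s\,I_{1+s}(\cN,\varphi_A),
\]
which realizes $s\,I_{1+s}(\cN)$ as the pointwise supremum over $\varphi_A$ of a family of convex functions of $s$, hence convex.

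The only nontrivial ingredient is the convexity of $s\mapsto s\,I_{1+s}(\cN,\varphi_A)$, because minimization over $\sigma_B$ does not in general preserve convexity of the pre-minimization integrand. Hayashi--Tomamichel handle this by a Sibson-type/variational reformulation that recasts the minimized quantity as the logarithm of a weighted Schatten $(1+s)$-norm of a fixed operator built from $\varphi_R$ and $\cN(\varphi_{RA})$; log-convexity in $s$ then follows from the standard interpolation / Lieb-type argument for Schatten norms. This is the main technical obstacle; the rest of the proof is order-preserving manipulation of suprema and infima.
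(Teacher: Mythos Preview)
Your proof is correct and follows essentially the same route as the paper. For convexity the two arguments are identical---invoke \cite{HayashiTomamichel2016correlation} for the convexity of $s\mapsto sI_{1+s}(\cN,\varphi_A)$ and pass to the pointwise maximum over $\varphi_A$; for monotonicity the paper simply cites \cite{CMW2016strong} for $s\mapsto I_{1+s}(\cN)$ being increasing, whereas you derive this from the monotonicity of $\alpha\mapsto D_\alpha$ and the order-preservation of $\inf/\sup$, which is a slightly more self-contained but equivalent justification.
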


\begin{IEEEproof}
It is shown in~\cite{CMW2016strong} that $s\mapsto I_{1+s}(\cN_{A\rar B})$ is monotonically increasing. This implies that $s\mapsto sI_{1+s}(\cN_{A\rar B})$ is monotonically increasing too. To see that $s\mapsto sI_{1+s}(\cN_{A\rar B})$ is convex, we write $sI_{1+s}(\cN_{A\rar B})=\max\limits_{\rho_A} \{sI_{1+s}(\cN_{A\rar B},\rho_A)\}$ and then note that $s\mapsto sI_{1+s}(\cN_{A\rar B},\rho_A)$ is convex by Ref.~\cite{HayashiTomamichel2016correlation}.
\end{IEEEproof}

\bigskip
\begin{IEEEproof}[Proof of Theorem~\ref{thm:reliability}: Eqs.~(\ref{eq:reliability-u}) and (\ref{eq:reliability-e})]
To prove Eq.~(\ref{eq:reliability-u}), we employ Theorem~\ref{thm:converse}. In Ref.~\cite{LiYao2024reliability}, it is shown that for any state $\rho_{RB}$ and $r\geq0$,
\begin{equation}
\label{eq:expmax}
\lim_{n \rightarrow \infty} \frac{-1}{n} \log \delta_{R^n : B^n}\big(\rho_{RB}^{\ox n}, nr\big)
=\frac{1}{2} \sup_{s \geq 0} \big\{ s\big(r-I_{1+s}(R : B)_{\rho_{RB}}\big) \big\}.
\end{equation}
So Theorem~\ref{thm:converse} and Eq.~(\ref{eq:expmax}) let us obtain that for any bipartite pure state $\varphi_{RA}$,
\begin{align}
       \limsup_{n\rar\infty} \frac{-1}{n} \log P^{\rm{sim}}\big(\cN_{A \rar B}^{\ox n}, nr\big) \nonumber
\leq & \lim_{n\rar\infty} \frac{-1}{n} \log \delta_{R^n : B^n}\big(\cN(\varphi_{RA})^{\ox n}, nr\big) \\
  =  & \frac{1}{2} \sup_{s \geq 0}\big\{s\big(r-I_{1+s}(R:B)_{\cN(\varphi_{RA})}\big)\big\}  \nonumber\\
  =  & \frac{1}{2} \sup_{s \geq 0}\big\{s\big(r-I_{1+s}(\cN_{A\rar B},\varphi_A)\big)\big\}.
      \label{eq:reliability-p1}
\end{align}
With this, our estimation is proceeded as
\begin{align}
     E^{\rm{sim}}(\cN_{A \rar B}, r)
\leq&\limsup_{n\rar\infty}\frac{-1}{n}\log P^{\rm{sim}}\big(\cN_{A\rar B}^{\ox n},nr\big) \nonumber\\
\leq&\frac{1}{2}\min_{\varphi_A\in\cS(A)}\sup_{s \geq 0}\big\{s\big(r-I_{1+s}
                                                   (\cN_{A\rar B},\varphi_A)\big)\big\}   \nonumber\\
  = &\frac{1}{2}\sup_{s \geq 0}\min_{\varphi_A\in\cS(A)}\big\{s\big(r-I_{1+s}
                                                   (\cN_{A\rar B},\varphi_A)\big)\big\}   \nonumber\\
  = &\frac{1}{2}\sup_{s\geq 0}\big\{s\big(r-I_{1+s}(\cN_{A\rar B})\big)\big\}. \label{eq:reliability-p2}
\end{align}
In Eq.~\eqref{eq:reliability-p2}, the first inequality is by the definition of the reliability function $E^{\rm{sim}}$, the second inequality follows from Eq.~(\ref{eq:reliability-p1}) by minimizing its right hand side over all states $\varphi_{A}\in\cS(A)$, and the first equality is by an application of Sion's minimax theorem~\cite{Sion1958general}. To see that Sion's minimax theorem is applicable here, we have (a) the function $s\mapsto s\big(r-I_{1+s}(\cN_{A\rar B},\varphi_A)\big)$ is continuous and concave on the convex set $[0,\infty)$ because $s\mapsto sI_{1+s}(\cN_{A\rar B},\varphi_A)$ is continuous and convex~\cite{HayashiTomamichel2016correlation}, and (b) for any $s\geq0$ the function
$\varphi_A\mapsto s\big(r-I_{1+s}(\cN_{A\rar B},\varphi_A)\big)$ is continuous and convex on the compact convex set $\cS(A)$, by Lemma~\ref{lem:concavity}.

\smallskip
At last, Eq.~(\ref{eq:reliability-e}) holds because when $r\leq R_{\rm{critical}}\equiv\frac{\mathrm{d}}{\mathrm{d}s}sI_{1+s}(\mathcal{N}_{A \rightarrow B})\big|_{s=1}$, the lower bound of Eq.~(\ref{eq:reliability-l}) and the upper bound of Eq.~(\ref{eq:reliability-u}) coincide. To see this, consider the function
\[
f(s)=s\big(r-I_{1+s}(\cN_{A \rar B})\big)
\]
for $s\in[0,\infty)$. The convexity of $s\mapsto sI_{1+s}(\cN_{A \rar B})$ shown in Lemma~\ref{lem:refunprop} implies that $f(s)$ is concave. So, $\sup\limits_{s\geq 0}f(s) = \max\limits_{0\leq s\leq 1}f(s)$ holds if and only if $f'(1)\leq 0$. This condition is equivalent to $r\leq R_{\rm{critical}}$.
\end{IEEEproof}

\section{Conclusion and Discussion}
\label{sec:discussion}
We have derived fundamental achievability and converse bounds for the performance of reverse Shannon simulation of a quantum channel. These bounds translate to the corresponding lower and upper bounds for the reliability function. By showing that the lower bound and upper bound coincide when the classical communication rate $r$ is such that $r\leq R_{\textrm{critical}}$, we have obtained the exact formula of the reliability function for $r$ falling in this range. Remarkably, our result has provided an operational interpretation to the channel's sandwiched R\'enyi information $I_{\alpha}(\cN)$ of order $\alpha\in(1,2]$, in characterizing the direct error exponent. We point out that in~\cite{LiYao2024strong}, an operational interpretation for $I_{\alpha}(\cN)$ with $\alpha\in(1,+\infty)$ has been found by the authors, in characterizing the strong converse exponent of entanglement-assisted communication.

It is well known that with the assistance of free entanglement, two bits of classical communication is equivalent to one qubit of quantum communication, as the consequence of quantum teleportation~\cite{BBCJPW1993teleporting} and dense coding~\cite{BennettWiesner1992communication}. Thus, as indicated in Lemma~\ref{lem:relation}, our solution applies as well to the problem of channel simulation with quantum communication assisted by unlimited free entanglement.

When the classical communication rate is above the critical point, we are unable to obtain the reliability function. In fact, the existence of a critical point, of which at one side the problem remains unsolved, is a common phenomenon in the study of reliability functions~\cite{Gallager1968information,Dalai2013lower, LYH2023tight}. Works in literature (e.g., \cite{Gallager1968information,Shannon1959zero,Lovasz1979on}) shows that combinatorics come into play for channel communication. We do not know whether this is still the case for our simulation setting. We leave it as an open problem. We note that in the independent work~\cite{RTB2023moderate}, the moderate deviation expansion for reverse Shannon simulation of quantum channels in the low-error case has been derived. The second-order asymptotics of this task remains a major open problem. Yet another interesting problem related to our work, is to find the reliability function for entanglement-assisted communication over quantum channels.

\section*{Acknowledgements}
The authors are grateful to one of the anonymous referees for suggesting look at the gap between the critical point and the channel's mutual information, which leads to Example~\ref{exam:crit-mi}. They also would like to thank Yimeng Cao for his assistance in drawing Fig.~\ref{fig:critical-v}.

\bigskip
{\appendix
The following proof of Lemma~\ref{lem:deFa} follows the same idea of~\cite{CKR2009postselection}.

\medskip
\begin{IEEEproof}[Proof of Lemma~\ref{lem:deFa}]
We start with an arbitrary state $\varphi_{TA^n}$ in which $T$ is any reference system. Set
\[
\tilde{\varphi}_{GTA^n}:=\frac{1}{|S_n|}\sum_{\pi\in S_n}\proj{\pi}_G \ox
(\id_T\ox W^\pi_{A^n})(\varphi_{TA^n}).
\]
Then $\tilde{\varphi}_{A^n}$ is invariant under permutations. So it has a symmetric purification
$\tilde{\varphi}_{\bar{R}^nA^n}$, which can be written in the vector form as
\[
\ket{\tilde{\varphi}}_{\bar{R}^nA^n}=\sqrt{(|A|)^n}\big(\1_{\bar{R}^n}\ox \sqrt{\tilde{\varphi}_{A^n}}\big)(\ket{\Phi}_{\bar{R}A})^{\ox n},
\]
with $\bR\cong A$ and $\ket{\Phi}_{\bar{R}A}=\frac{1}{\sqrt{|A|}}\sum_{i=1}^{|A|}\ket{i}_{\bar{R}}\ket{i}_A$ being the maximally entangled state. $\ket{\tilde{\varphi}}_{\bar{R}^nA^n}$ is symmetric in the sense that $W^\pi_{(\bR A)^n}\ket{\tilde{\varphi}}_{\bar{R}^nA^n}=\ket{\tilde{\varphi}}_{\bar{R}^nA^n}$ for any $\pi\in S_n$. We have
\begin{align}
    &1-F\big(\cM(\varphi_{TA^n}), \cE(\varphi_{TA^n})\big) \nonumber \\
=   &1-\sum_{\pi\in S_n} \frac{1}{|S_n|}F\big(\cM(W^\pi_{A^n}(\varphi_{TA^n})),
     \cE(W^\pi_{A^n}(\varphi_{TA^n}))\big) \nonumber \\
=   &1-F\big(\cM(\tilde{\varphi}_{GTA^n}), \cE(\tilde{\varphi}_{GTA^n})\big) \nonumber \\
\leq&1-F\big(\cM(\tilde{\varphi}_{\bar{R}^nA^n}), \cE(\tilde{\varphi}_{\bar{R}^nA^n})\big), \label{eq:deF-1}
\end{align}
where the second line is due to the symmetry of $\cM_{A^n\rar B^n}$ and $\cE_{A^n\rar B^n}$ as well as the fact that the fidelity function is invariant under the action of unitary operations, the third line is by direct calculation, and the last line follows from the monotonicity of the fidelity function under CPTP maps.

Let
\[
\zeta_{\bR^nA^n}:=\int\phi_{\bR A}^{\ox n}\di\nu(\phi_{\bR A}),
\]
where the integration is over all pure states on $\cH_{\bR A}$ and $\nu(\cdot)$ is the unique uniform spherical measure satisfying $\nu(\phi)=\nu(u\phi u^*)$ for any unitary transformation $u$ on $\cH_{\bR A}$. Then by Schur's lemma, $\zeta_{\bR^nA^n}$ is the maximally mixed state supported on the symmetric subspace
\[\mathrm{Sym}(\cH_{\bar{R}A}^{\ox n})
:=\left\{\ket{v}\in\cH_{\bR A}^{\ox n}:W^\pi_{(\bR A)^n}\ket{v}=\ket{v}, \forall\ \pi\in S_n\right\}.
\]
Let $\{\ket{\tilde{\varphi}^{(a)}}_{\bar{R}^nA^n}\}_{a=1}^{g_{n,|A|}}$ be an orthonormal basis of $\mathrm{Sym}(\cH_{\bar{R}A}^{\ox n})$, with the identification $\ket{\tilde{\varphi}^{(1)}}_{\bar{R}^nA^n}\equiv\ket{\tilde{\varphi}}_{\bar{R}^nA^n}$. Note that the dimension of $\mathrm{Sym}(\cH_{\bar{R}A}^{\ox n})$ is exactly $g_{n,|A|}$. So, we have $\zeta_{\bR^nA^n}=\frac{1}{g_{n,|A|}}\sum_{a=1}^{g_{n,|A|}}\tilde{\varphi}^{(a)}_{\bar{R}^nA^n}$ and its purification can be written as
\[
\ket{\zeta}_{\tR\bR^nA^n}=\frac{1}{\sqrt{g_{n,|A|}}}\sum_{a=1}^{g_{n,|A|}}\ket{a}_{\tR}
\ket{\tilde{\varphi}^{(a)}}_{\bar{R}^nA^n},
\]
where $\{\ket{a}_{\tR}\}$ is an orthonormal basis of $\tR$. We now have
\begin{align}
     &F\big(\cM(\zeta_{\tR\bR^nA^n}), \cE(\zeta_{\tR\bR^nA^n})\big) \nonumber\\
\leq &F\Big(\cM\big(\frac{1}{{g_{n,|A|}}}\sum_a\proj{a}_{\tR}\ox
            \tilde{\varphi}^{(a)}_{\bar{R}^nA^n}\big),
            \cE\big(\frac{1}{{g_{n,|A|}}}\sum_a\proj{a}_{\tR}\ox \tilde{\varphi}^{(a)}_{\bar{R}^nA^n}\big)\Big)      \nonumber \\
  =  &\frac{1}{{g_{n,|A|}}}F\big(\cM(\tilde{\varphi}_{\bar{R}^nA^n}),
                                 \cE(\tilde{\varphi}_{\bar{R}^nA^n})\big)
      +\frac{1}{{g_{n,|A|}}}\sum_{a=2}^{g_{n,|A|}}
                           F\big(\cM(\tilde{\varphi}_{\bar{R}^nA^n}^{(a)}),
                                 \cE(\tilde{\varphi}_{\bar{R}^nA^n}^{(a)})\big) \nonumber \\
\leq &\frac{1}{{g_{n,|A|}}}F\big(\cM(\tilde{\varphi}_{\bar{R}^nA^n}),
                                 \cE(\tilde{\varphi}_{\bar{R}^nA^n})\big)
      +\frac{g_{n,|A|}-1}{g_{n,|A|}}, \label{eq:deF-2}
\end{align}
where the first inequality comes from the monotonicity of the fidelity function under CPTP maps.

From Eq.~(\ref{eq:deF-1}) and Eq.~(\ref{eq:deF-2}) we get that
\begin{align*}
     &g_{n,|A|}\Big(1-F\big(\cM(\zeta_{\tR\bR^nA^n}), \cE(\zeta_{\tR\bR^nA^n})\big)\Big) \\
\geq &1-F\big(\cM(\tilde{\varphi}_{\bar{R}^nA^n}), \cE(\tilde{\varphi}_{\bar{R}^nA^n})\big) \\
\geq &1-F\big(\cM(\varphi_{TA^n}), \cE(\varphi_{TA^n})\big).
\end{align*}
Since the state $\varphi_{TA^n}$ was chosen arbitrarily, the above relation lets us obtain
\begin{align*}
     &P(\cM_{A^n\rar B^n},\cE_{A^n\rar B^n}) \\
  =  &\sqrt{1-\min_{\varphi_{TA^n}} F\big(\cM(\varphi_{TA^n}), \cE(\varphi_{TA^n})\big)^2} \\
\leq &\sqrt{\Big(1-\min_{\varphi_{TA^n}} F\big(\cM(\varphi_{TA^n}), \cE(\varphi_{TA^n})\big)\Big)
            \Big(1+F\big(\cM(\zeta_{\tR\bR^nA^n}),\cE(\zeta_{\tR\bR^nA^n})\big)\Big)} \\
\leq &\sqrt{g_{n,|A|}\Big(1-F\big(\cM(\zeta_{\tR\bR^nA^n}),\cE(\zeta_{\tR\bR^nA^n})\big)\Big)
            \Big(1+F\big(\cM(\zeta_{\tR\bR^nA^n}),\cE(\zeta_{\tR\bR^nA^n})\big)\Big)} \\
  =  &\sqrt{g_{n,|A|}}\ P\big(\cM(\zeta_{\tR\bR^nA^n}),\cE(\zeta_{\tR\bR^nA^n})\big).
\end{align*}
Noticing that the de Finetti state $\zeta_{R_nA^n}$ in Lemma~\ref{lem:deFa} is identical to $\zeta_{\tR\bR^nA^n}$ by setting $R_n\equiv\tR\bR^n$, we complete the proof.
\end{IEEEproof}

\bigskip
The following proposition is about properties of the lower and upper bounds of the reliability function.
\begin{proposition}
  \label{prop:propertyreli}
For the quantum channel $\cN_{A\rar B}$, we have
\begin{enumerate}[1.]
\item when $r> R_{\rm{critical}}$, the lower bound $E_l(r)=\frac{1}{2}\max\limits_{0\leq s\leq 1}
  \{s(r-I_{1+s}(\cN))\}$ becomes linear: $E_l(r)=\frac{1}{2}(r-I_2(\cN))$;
\item when $r>I_\text{max}(\cN)\equiv\lim\limits_{s\rar\infty}I_{1+s}(\cN)$, the
  upper bound $E_u(r)=\frac{1}{2} \sup\limits_{s \geq 0}\{s(r-I_{1+s}(\cN))\}$ is infinity.
\end{enumerate}
\end{proposition}

\begin{IEEEproof}
(1) Since $s\mapsto sI_{1+s}(\cN)$ is convex on $[0,\infty)$ by Lemma~\ref{lem:refunprop}, the function $f(s)=s(r-I_{1+s}(\cN))$ is concave. When $f'(1)>0$, which is equivalent to $r> R_{\rm{critical}}$, we have $E_l(r)=\frac{1}{2} f(1)=\frac{1}{2}(r-I_2(\cN))$. (2) The function $s\mapsto I_{1+s}(\cN)$ is monotonically increasing~\cite{CMW2016strong} and obviously bounded for $s\in[0,\infty)$. So, the limit $\lim\limits_{s\rar\infty}I_{1+s}(\cN)$ exists and the statement follows.
\end{IEEEproof}
}

\bigskip
\begin{figure}[ht]
  \includegraphics[width=8cm]{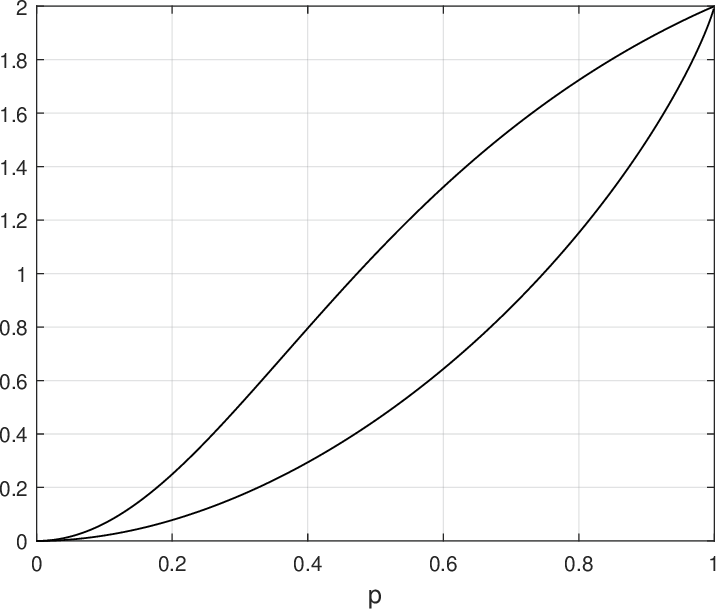} \centering
  \caption{Critical value and mutual information of the qubit depolarizing channel. Upper curve is $R_{\rm{critical}}(\cN^{(p)})$, and lower curve is $I(\cN^{(p)})$.
  }
  \label{fig:critical-v}
\end{figure}

In the following example, we calculate the mutually information and the critical value for the depolarizing channels.
\begin{example}
\label{exam:crit-mi}
We consider the depolarizing channels $\cN_{A\rar B}^{(p)}$ with $|A|=|B|=d$ and $p\in[0,1]$, given by
\[
\cN^{(p)}(\rho):=(1-p)\rho + p\frac{\1}{d}.
\]
Due to the symmetry of this channel, namely, its being covariant with respect to the Heisenberg-Weyl group, we can easily verify that for $s\geq 0$,
\[
I_{1+s}(\cN^{(p)})
=D_{1+s}\left(p\Phi_{AB}+(1-p)\frac{\1_A}{d}\ox\frac{\1_B}{d}\Big\|\frac{\1_A}{d}\ox\frac{\1_B}{d}\right),
\]
where $\Phi_{AB}$ is a maximally entangled pure state. With this, we get by direct computation that
\begin{align*}
I(\cN^{(p)})&=\frac{d^2p+1-p}{d^2}\log(d^2p+1-p)+\frac{(d^2-1)(1-p)}{d^2}\log(1-p), \\
R_{\rm{critical}}(\cN^{(p)})&=\frac{(d^2p+1-p)^2\log(d^2p+1-p)}{(d^2p+1-p)^2+(d^2-1)(1-p)^2}
                             +\frac{(d^2-1)(1-p)^2\log(1-p)}{(d^2p+1-p)^2+(d^2-1)(1-p)^2}.
\end{align*}

In Fig.~\ref{fig:critical-v}, we depict these two quantities as functions of $p$, in the special qubit case $d=2$. It shows that $R_{\rm{critical}}(\cN^{(p)})$ is considerably larger than $I(\cN^{(p)})$. On the other hand, when $d$ is big, $R_{\rm{critical}}(\cN^{(p)})$ can be much larger than $I(\cN^{(p)})$, since for constant $p$ we have
\begin{align*}
I(\cN^{(p)}) &\sim 2p\log d, \\
R_{\rm{critical}}(\cN^{(p)}) &\sim 2\log d.
\end{align*}
\end{example}


\begin{thebibliography}{99}

\bibitem{BDHSW2014quantum}
C.~H. Bennett, I.~Devetak, A.~W. Harrow, P.~W. Shor, and A.~Winter, ``The
  quantum reverse {Shannon} theorem and resource tradeoffs for simulating
  quantum channels,'' \emph{IEEE Trans. Inf. Theory}, vol.~60, no.~5, pp.
  2926--2959, 2014.

\bibitem{BCR2011the}
M.~Berta, M.~Christandl, and R.~Renner, ``The quantum reverse {Shannon} theorem
  based on one-shot information theory,'' \emph{Commun. Math. Phys.}, vol. 306,
  no.~3, pp. 579--615, 2011.

\bibitem{BSST2002entanglement}
C.~H. Bennett, P.~W. Shor, J.~A. Smolin, and A.~V. Thapliyal,
  ``Entanglement-assisted capacity of a quantum channel and the reverse
  {Shannon} theorem,'' \emph{IEEE Trans. Inf. Theory}, vol.~48, no.~10, pp.
  2637--2655, 2002.

\bibitem{Devetak2006triangle}
I.~Devetak, ``Triangle of dualities between quantum communication protocols,''
  \emph{Phys. Rev. Lett.}, vol.~97, p. 140503, 2006.

\bibitem{BBCW2013entanglement}
M.~Berta, F.~G. Brandao, M.~Christandl, and S.~Wehner, ``Entanglement cost of
  quantum channels,'' \emph{IEEE Trans. Inf. Theory}, vol.~59, no.~10, pp.
  6779--6795, 2013.

\bibitem{FWTB2020quantum}
K.~Fang, X.~Wang, M.~Tomamichel, and M.~Berta, ``Quantum channel simulation and
  the channel's smooth max-information,'' \emph{IEEE Trans. Inf. Theory},
  vol.~66, no.~4, pp. 2129--2140, 2020.

\bibitem{TWH2020application}
R.~Takagi, K.~Wang, and M.~Hayashi, ``Application of the resource theory of
  channels to communication scenarios,'' \emph{Phys. Rev. Lett.}, vol. 124, p.
  120502, 2020.

\bibitem{CLMW2011zero}
T.~S. Cubitt, D.~Leung, W.~Matthews, and A.~Winter, ``Zero-error channel
  capacity and simulation assisted by non-local correlations,'' \emph{IEEE
  Trans. Inf. Theory}, vol.~57, no.~8, pp. 5509--5523, 2011.

\bibitem{DuanWinter2015no}
R.~Duan and A.~Winter, ``No-signalling-assisted zero-error capacity of quantum
  channels and an information theoretic interpretation of the {Lov{\'a}sz}
  number,'' \emph{IEEE Trans. Inf. Theory}, vol.~62, no.~2, pp. 891--914, 2015.

\bibitem{FBB2019thermodynamic}
P.~Faist, M.~Berta, and F.~Brand{\~a}o, ``Thermodynamic capacity of quantum
  processes,'' \emph{Phys. Rev. Lett.}, vol. 122, p. 200601, 2019.

\bibitem{FBB2021thermodynamic}
P.~Faist, M.~Berta, and F.~G. Brandao, ``Thermodynamic implementations of
  quantum processes,'' \emph{Commun. Math. Phys.}, vol. 384, no.~3, pp.
  1709--1750, 2021.

\bibitem{GourScandolo2020dynamical}
G.~Gour and C.~M. Scandolo, ``Dynamical entanglement,'' \emph{Phys. Rev.
  Lett.}, vol. 125, p. 180505, 2020.

\bibitem{RegulaTakagi2021one}
B.~Regula and R.~Takagi, ``One-shot manipulation of dynamical quantum
  resources,'' \emph{Phys. Rev. Lett.}, vol. 127, p. 060402, 2021.

\bibitem{ChitambarGour2019quantum}
E.~Chitambar and G.~Gour, ``Quantum resource theories,'' \emph{Rev. Mod.
  Phys.}, vol.~91, p. 025001, 2019.

\bibitem{BSST1999entanglement}
C.~H. Bennett, P.~W. Shor, J.~A. Smolin, and A.~V. Thapliyal,
  ``Entanglement-assisted classical capacity of noisy quantum channels,''
  \emph{Phys. Rev. Lett.}, vol.~83, pp. 3081--3084, 1999.

\bibitem{DHW2012quantum}
N.~Datta, M.-H. Hsieh, and M.~M. Wilde, ``Quantum rate distortion, reverse
  {Shannon} theorems, and source-channel separation,'' \emph{IEEE Trans. Inf.
  Theory}, vol.~59, no.~1, pp. 615--630, 2012.

\bibitem{WDHW2013quantum}
M.~M. Wilde, N.~Datta, M.-H. Hsieh, and A.~Winter, ``Quantum rate-distortion
  coding with auxiliary resources,'' \emph{IEEE Trans. Inf. Theory}, vol.~59,
  no.~10, pp. 6755--6773, 2013.

\bibitem{HsiehShun2016channel}
M.-H. Hsieh and S.~Watanabe, ``Channel simulation and coded source
  compression,'' \emph{IEEE Trans. Inf. Theory}, vol.~62, no.~11, pp.
  6609--6619, 2016.

\bibitem{Gallager1968information}
R.~Gallager, \emph{Information Theory and Reliable Communication}.\hskip 1em
  plus 0.5em minus 0.4em\relax John Wiley \& Sons, 1968.

\bibitem{BurnashevHolevo1998on}
M.~Burnashev and A.~S. Holevo, ``On the reliability function for a quantum
  communication channel,'' \emph{Probl. Inf. Transm.}, vol.~34, no.~2, pp.
  97--107, 1998.

\bibitem{Winter1999coding}
A.~Winter, ``Coding theorems of quantum information theory,'' \emph{PhD thesis,
  Universit\"{a}t Bielefeld}, 1999.

\bibitem{Holevo2000reliability}
A.~S. Holevo, ``Reliability function of general classical-quantum channel,''
  \emph{IEEE Trans. Inf. Theory}, vol.~46, no.~6, pp. 2256--2261, 2000.

\bibitem{Hayashi2007error}
M.~Hayashi, ``Error exponent in asymmetric quantum hypothesis testing and its
  application to classical-quantum channel coding,'' \emph{Phys. Rev. A},
  vol.~76, p. 062301, 2007.

\bibitem{Dalai2013lower}
M.~Dalai, ``Lower bounds on the probability of error for classical and
  classical-quantum channels,'' \emph{IEEE Trans. Inf. Theory}, vol.~59,
  no.~12, pp. 8027--8056, 2013.

\bibitem{Hayashi2015precise}
M.~Hayashi, ``Precise evaluation of leaked information with secure randomness
  extraction in the presence of quantum attacker,'' \emph{Commun. Math. Phys.},
  vol. 333, no.~1, pp. 335--350, 2015.

\bibitem{DalaiWinter2017constant}
M.~Dalai and A.~Winter, ``Constant compositions in the sphere packing bound for
  classical-quantum channels,'' \emph{IEEE Trans. Inf. Theory}, vol.~63, no.~9,
  pp. 5603--5617, 2017.

\bibitem{QWW2018applications}
H.~Qi, Q.~Wang, and M.~M. Wilde, ``Applications of position-based coding to
  classical communication over quantum channels,'' \emph{J. Phys. A: Math.
  Theor.}, vol.~51, p. 444002, 2018.

\bibitem{MBGYA2019a}
M.~M. Mojahedian, S.~Beigi, A.~Gohari, M.~H. Yassaee, and M.~R. Aref, ``A
  correlation measure based on vector-valued $ L_p $-norms,'' \emph{IEEE
  Trans. Inf. Theory}, vol.~65, no.~12, pp. 7985--8004, 2019.

\bibitem{CHT2019quantum}
H.-C. Cheng, M.-H. Hsieh, and M.~Tomamichel, ``Quantum sphere-packing bounds
  with polynomial prefactors,'' \emph{IEEE Trans. Inf. Theory}, vol.~65, no.~5,
  pp. 2872--2898, 2019.

\bibitem{CHDH2020non}
H.-C. Cheng, E.~P. Hanson, N.~Datta, and M.-H. Hsieh, ``Non-asymptotic
  classical data compression with quantum side information,'' \emph{IEEE Trans.
  Inf. Theory}, vol.~67, no.~2, pp. 902--930, 2021.

\bibitem{Dupuis2021privacy}
F.~Dupuis, ``Privacy amplification and decoupling without smoothing,''
  \emph{IEEE Trans. Inf. Theory}, vol.~69, no.~12, pp. 7784--7792, 2023.

\bibitem{LYH2023tight}
K.~Li, Y.~Yao, and M.~Hayashi, ``Tight exponential analysis for smoothing the
  max-relative entropy and for quantum privacy amplification,'' \emph{IEEE
  Trans. Inf. Theory}, vol.~69, no.~3, pp. 1680--1694, 2023.

\bibitem{LiYao2024reliability}
K.~Li and Y.~Yao, ``Reliability function of quantum information decoupling via
  the sandwiched {R\'enyi} divergence,'' \emph{Commun. Math. Phys.}, vol.~405,
  no.~7, p. 160, 2024.

\bibitem{Renes2022achievable}
J.~M. Renes, ``Achievable error exponents of data compression with quantum side
  information and communication over symmetric classical-quantum channels,''
  \emph{arXiv:2207.08899}, 2022.

\bibitem{MDSFT2013on}
M.~M{\"u}ller-Lennert, F.~Dupuis, O.~Szehr, S.~Fehr, and M.~Tomamichel, ``On
  quantum {R{\'e}nyi} entropies: a new generalization and some properties,''
  \emph{J. Math. Phys.}, vol.~54, p. 122203, 2013.

\bibitem{WWY2014strong}
M.~M. Wilde, A.~Winter, and D.~Yang, ``Strong converse for the classical
  capacity of entanglement-breaking and {Hadamard} channels via a sandwiched
  {R{\'e}nyi} relative entropy,'' \emph{Commun. Math. Phys.}, vol. 331, no.~2,
  pp. 593--622, 2014.

\bibitem{GuptaWilde2015multiplicativity}
M.~K. Gupta and M.~M. Wilde, ``Multiplicativity of completely bounded $p$-norms
  implies a strong converse for entanglement-assisted capacity,'' \emph{Commun.
  Math. Phys.}, vol. 334, no.~2, pp. 867--887, 2015.

\bibitem{LiYao2021reliable}
K.~Li and Y.~Yao, ``Reliable simulation of quantum channels,''
\emph{arXiv:2112.04475v1}, 2021.

\bibitem{RTB2023moderate}
N.~Ramakrishnan, M.~Tomamichel, and M.~Berta, ``Moderate deviation expansion
  for fully quantum tasks,'' \emph{IEEE Trans. Inf. Theory}, vol.~69, no.~8,
  pp. 5041--5059, 2023.

\bibitem{ChengGao2024error}
H.-C. Cheng and L.~Gao, ``Error exponent and strong converse for quantum soft
  covering,'' \emph{IEEE Trans. Inf. Theory}, vol.~70, no.~5, pp. 3499--3511,
  2024.

\bibitem{ChengGao2023tight}
------, ``Tight one-shot analysis for convex splitting with applications in
  quantum information theory,'' \emph{arXiv:2304.12055}, 2023.

\bibitem{BeigiTomamichel2023lower}
S.~Beigi and M.~Tomamichel, ``Lower bounds on error exponents via a new quantum
  decoder,'' \emph{arXiv:2310.09014}, 2023.

\bibitem{Beigi2013sandwiched}
S.~Beigi, ``Sandwiched {R{\'e}nyi} divergence satisfies data processing
  inequality,'' \emph{J. Math. Phys.}, vol.~54, p. 122202, 2013.

\bibitem{BBCJPW1993teleporting}
C.~H. Bennett, G.~Brassard, C.~Cr\'epeau, R.~Jozsa, A.~Peres, and W.~K.
  Wootters, ``Teleporting an unknown quantum state via dual classical and
  {Einstein-Podolsky-Rosen} channels,'' \emph{Phys. Rev. Lett.}, vol.~70, pp.
  1895--1899, 1993.

\bibitem{BennettWiesner1992communication}
C.~H. Bennett and S.~J. Wiesner, ``Communication via one- and two-particle
  operators on {Einstein-Podolsky-Rosen} states,'' \emph{Phys. Rev. Lett.},
  vol.~69, pp. 2881--2884, 1992.

\bibitem{Watrous2018the}
J.~Watrous, \emph{The Theory of Quantum Information}.\hskip 1em plus 0.5em
  minus 0.4em\relax Cambridge University Press, Cambridge, 2018.

\bibitem{CKR2009postselection}
M.~Christandl, R.~K{\"o}nig, and R.~Renner, ``Postselection technique for
  quantum channels with applications to quantum cryptography,'' \emph{Phys.
  Rev. Lett.}, vol. 102, p. 020504, 2009.

\bibitem{CMW2016strong}
T.~Cooney, M.~Mosonyi, and M.~M. Wilde, ``Strong converse exponents for a
  quantum channel discrimination problem and quantum-feedback-assisted
  communication,'' \emph{Commun. Math. Phys.}, vol. 344, no.~3, pp. 797--829,
  2016.

\bibitem{ADHW2009mother}
A.~Abeyesinghe, I.~Devetak, P.~Hayden, and A.~Winter, ``The mother of all
  protocols: restructuring quantum information's family tree,'' \emph{Proc. R.
  Soc. A}, vol. 465, pp. 2537--2563, 2009.

\bibitem{ADJ2017quantum}
A.~Anshu, V.~K. Devabathini, and R.~Jain, ``Quantum communication using
  coherent rejection sampling,'' \emph{Phys. Rev. Lett.}, vol. 119, p. 120506,
  2017.

\bibitem{MBDRC2017catalytic}
C.~Majenz, M.~Berta, F.~Dupuis, R.~Renner, and M.~Christandl, ``Catalytic
  decoupling of quantum information,'' \emph{Phys. Rev. Lett.}, vol. 118, p.
  080503, 2017.

\bibitem{MosonyiOgawa2017strong}
M.~Mosonyi and T.~Ogawa, ``Strong converse exponent for classical-quantum
  channel coding,'' \emph{Commun. Math. Phys.}, vol. 355, no.~1, pp. 373--426,
  2017.

\bibitem{Hayashi2010universal}
M.~Hayashi, ``Universal approximation of multi-copy states and universal
  quantum lossless data compression,'' \emph{Commun. Math. Phys.}, vol. 293,
  pp. 171--183, 2010.

\bibitem{FrankLieb2013monotonicity}
R.~L. Frank and E.~H. Lieb, ``Monotonicity of a relative {R{\'e}nyi} entropy,''
  \emph{J. Math. Phys.}, vol.~54, p. 122201, 2013.

\bibitem{HayashiTomamichel2016correlation}
M.~Hayashi and M.~Tomamichel, ``Correlation detection and an operational
  interpretation of the {R{\'e}nyi} mutual information,'' \emph{J. Math.
  Phys.}, vol.~57, p. 102201, 2016.

\bibitem{Sion1958general}
M.~Sion, ``On general minimax theorems,'' \emph{Pac. J. Math.}, vol.~8, no.~1,
  pp. 171--176, 1958.

\bibitem{LiYao2024strong}
K.~Li and Y.~Yao, ``Strong converse exponent for entanglement-assisted
  communication,'' \emph{IEEE Trans. Inf. Theory}, vol.~70, no.~7, pp. 5017--5029,
  2024.

\bibitem{Shannon1959zero}
C.~E. Shannon, ``The zero error capacity of a noisy channel,'' \emph{IRE Trans.
  Inform. Theory}, vol.~2, no.~3, pp. 8--19, 1956.

\bibitem{Lovasz1979on}
L.~Lov{\'a}sz, ``On the {Shannon} capacity of a graph,'' \emph{IEEE Trans. Inf.
  Theory}, vol.~25, no.~1, pp. 1--7, 1979.

\end{thebibliography}

\end{document}